\long\def\symbolfootnote[#1]#2{\begingroup%
\def\thefootnote{\fnsymbol{footnote}}\footnote[#1]{#2}\endgroup}
\newcommand{\FNorm }[1]{\mbox{}\|#1\|_\mathrm{F}  }
\newcommand{\TNorm }[1]{\mbox{}\|#1\|_2  }
\newcommand{\TNormS}[1]{\mbox{}\|#1\|_2^2}
\newtheorem{theorem}{\bf Theorem}[]
\newtheorem{lemma}[theorem]{Lemma}
\newcommand{\transp}{^{\textsc{T}}}
\newcommand{\mat}[1]{{\ensuremath{\bm{\mathrm{#1}}}}}
\newcommand{\pinv}[1]{ {#1}^{\dagger}}
\def\rank{\hbox{\rm rank}}
\def\b{{\mathbf b}}
\def\rb{{\mathbf r}}
\def\u{{\mathbf u}}
\def\v{{\mathbf v}}
\def\matA{\mat{A}}
\def\matB{\mat{B}}
\def\matE{\mat{E}}
\def\matI{\mat{I}}
\def\matQ{\mat{Q}}
\def\matR{\mat{R}}
\def\matS{\mat{S}}
\def\matU{\mat{U}}
\def\matV{\mat{V}}
\def\matW{\mat{W}}
\def\matX{\mat{X}}
\def\matY{\mat{Y}}
\def\matZ{\mat{Z}}
\def\matSig{\mat{\Sigma}}
\DeclareMathSymbol{\Prob}{\mathbin}{AMSb}{"50}
\newcommand\remove[1]{}
\newcommand\ignore[1]{}
\def\nnz{{ \rm nnz }}
\def\math#1{$#1$}
\def\mand#1{$$#1$$}
\def\frac#1#2{{#1\over #2}}
\def\mld#1{\begin{equation}
#1
\end{equation}}
\def\eqar#1{\begin{eqnarray}
#1
\end{eqnarray}}
\def\eqan#1{\begin{eqnarray*}
#1
\end{eqnarray*}}
\DeclareMathSymbol{\R}{\mathbin}{AMSb}{"52}
\def\qed{\hfill\rule{2mm}{2mm}}
\def\x{{\mathbf x}}
\def\z{{\mathbf z}}
\def\a{{\mathbf a}}
\def\b{{\mathbf b}}
\def\norm#1{{\|#1\|}}
\def\r#1{{(\ref{#1})}}
\newcommand{\red}[1]{{\color[named]{Red} #1}}
\newcommand{\blue}[1]{{\color[named]{Blue} #1}}
\def\dotfil{\leaders\hbox to 1.5mm{.}\hfill}
\newcounter{rmnum}
\def\RN#1{\setcounter{rmnum}{#1}\uppercase\expandafter{\romannumeral\value{rmnum}}}
\def\rn#1{\setcounter{rmnum}{#1}\expandafter{\romannumeral\value{rmnum}}}
\def\matLambda{\mat{\Lambda}}
\def\red#1{{\color[rgb]{1,0,0} #1}}
\def\blue#1{{\color[rgb]{0,0,1} #1}}
\def\nnz{{\text{nnz}}}
\begin{document}
\title{ {\bf Faster SVD-Truncated Regularized Least-Squares}}
\author{
Christos Boutsidis\thanks{
Mathematical Sciences Department, IBM T.J. Watson Research Center, Yorktown Heights NY 10598, USA.}
\and
Malik Magdon-Ismail\thanks{
Computer Science Department,
Rensselaer Polytechnic Institute, Troy NY 12180, USA,
magdon@cs.rpi.edu.}
}

\date{}
\maketitle

\begin{abstract}
We develop a fast algorithm for computing the
``SVD-truncated'' regularized  solution to the least-squares problem:
$ \min_{\x} \TNorm{\matA \x - \b}. $
Let $\matA_k$ of rank $k$ be the best rank $k$ matrix computed via the SVD of $\matA$.
Then, the SVD-truncated regularized solution is:
$ \x_k = \pinv{\matA}_k \b. $
If $\matA$ is $m \times n$, then,
it takes $O(m n \min\{m,n\})$ 
time
to compute $\x_k $ 
using the 
SVD of \math{\matA}.  We give an approximation algorithm
for \math{\x_k} which 
constructs a rank-\math{k} approximation
$\tilde{\matA}_{k}$ and computes 
$ \tilde{\x}_{k} = \pinv{\tilde\matA}_{k} \b$
in
roughly $O(\nnz(\matA) k \log n)$ 
time.
Our algorithm uses a randomized variant of the subspace iteration.
We show that, with high probability:
$ \TNorm{\matA  \tilde{\x}_{k} - \b} \approx \TNorm{\matA \x_k - \b}$
and
$\TNorm{\x_k - \tilde\x_k} \approx 0. $
\end{abstract}
\section{Introduction}
We consider the least-squares regression problem:
$$
\min_{\x \in \R^n}\TNorm{\matA \x -\b}, \hspace{.21in} 
\matA \in \R^{m \times n}, \hspace{.11in} 
\b\in\R^m, \quad m \ge n.
$$
We assume \math{\matA} has rank \math{\rho\le n}.
Via the SVD 
(see Section~\ref{sec:pre}), 
we can decompose \math{\matA} as:
$$
\matA = \sum_{i=1}^{\rho} \sigma_i \u_i  \v_i\transp,
$$
where
$\{ \u_i \in \R^m, \v_i \in \R^n, \sigma_i \in \R_+ \}$ are respectively
the left and  right singular vectors, and singular values 
of~\math{\matA}. The orthonormal
left and right singular matrices \math{\matU_\matA=[\u_1,\ldots,\u_{\rho}]} and 
\math{\matV_\matA=[\v_1,\ldots,\v_{\rho}]} have the singular vectors as
columns.
The optimal solution to 
this least-squares problem is:
$$ \x_{opt} = \sum_{i=1}^\rho \frac{\u_i\transp \b}{\sigma_i} \v_i. $$
When $\sigma_i \rightarrow 0,$ for some $i$, 
the solution is numerically unstable, and the problem becomes
ill-posed~\cite{Han90}.
Regularization helps with the numerical instablity as well as 
improving the generalization performance of machine learning algorithms
that use regression. 
Perhaps the simplest and most popular regularization technique
is 
Tikhonov regularization (or weight decay) 
\cite[Ex. 4.5]{LFDbook}
which results in the solution:
$$ \x_{\lambda} = \sum_{i=1}^{\rho} \frac{\sigma_i^2}{\sigma_i^2 + \lambda_i^2} \frac{\u_i\transp \b}{\sigma_i} \v_i, $$
where $\lambda_i > 0$ are the
regularization parameters (often chosen uniform, \math{\lambda_i=\lambda}). 
This regularized solution minimizes a penalized error
$
\TNormS{\matA \x -\b} + \TNormS{\matLambda\matV_{\matA} \transp\x},
$
where \math{\matLambda} is a $\rho \times \rho$ diagonal matrix with entries
\math{\matLambda_{ii}=\lambda_i}; each \math{\lambda_i} quantifies how much
one chooses to regularize the \math{i}th singular space of \math{\matA}.

SVD-truncated regularization, the focus of this paper,
 is a special case of Tikhonov regularization with 
\math{\lambda_i=0} for \math{i\le k} ($k< \rho$ is the truncation parameter)
and 
\math{\lambda_i\rightarrow\infty} otherwise~\cite{Han90}.
The SVD-truncated regularized solution \math{\x_k} is:
$$ \x_{k} = \sum_{i=1}^k  \frac{\u_i\transp \b}{\sigma_i} \v_i. $$
From the SVD,
$
\matA_k = \sum_{i=1}^k \sigma_i \u_i \v_i\transp
$ 
is
a best rank \math{k} approximation to 
\math{\matA}. So,
$\x_k$ solves a least-squares problem with \math{\matA_k}:
$$
\min_{\x \in \R^n}\TNorm{\matA_k \x -\b}, \hspace{.21in} 
\matA \in \R^{m \times n}, \hspace{.11in} 
\b\in\R^m, \quad m \ge n.
$$
Appropriately choosing \math{\lambda_i} or \math{k} are important
problems from the numerical linear algebra perspective as well as the 
machine learning perspective, and we refer to  
Section 5 in~\cite{Han90} for some discussion on this topic. For our
purposes, we take \math{k} as given.
That is, \math{\x_k} is the solution we want\footnote{When \math{\b} 
is concentrated in the top singular subspaces,
\math{\x_k}  approximates
\math{\x_{opt}}. Indeed, if
\math{\u_i\transp\b\ge\sigma_i^\alpha} for some \math{\alpha\ge 1},
then (\cite[Theorem 3.1]{Han90}): 
$
\TNorm{\x_{opt} - \x_k} / \TNorm{\x_{opt}}  \le 
                        \sqrt{n} \left( \sigma_{k+1} / \sigma_k \right)^{\alpha-1}.$ That is,
the SVD-truncated solution is
near optimal when 
the singular value gap \math{\gamma_k=\sigma_{k+1}/\sigma_k} is small.
},
and our goal is to
compute a good approximation to \math{\x_k} quickly in 
\math{o(mn^2)} time, since the SVD may be too expensive 
if $\matA$ is massive.

{\bf Our contributions.}
Via a recent randomized variant~\cite{RST09,HMT11} of the 
subspace iteration method, we develop a fast randomized algorithm
to compute an $\tilde{\x}_k$ in roughly $O(\nnz(\matA)k \log n)$ 
time where \math{\nnz(\matA)} is the number of non-zeros in \math{\matA}.
We describe this algorithm in Section~\ref{sec:main} and
give precise error estimates for its performance in Theorem~\ref{thm1}.
We show that there is not much room for improvement 
upon these estimates by providing a lower bound in Theorem~\ref{thm2}. 

\section{Related Work}
SVD-truncated regression
has been around for some time. 
See, for example,~\cite{Var73, Han90}
and references therein for some background and
applications of this regularization technique.

To develop faster SVD-truncated regression,
our approach is to first compute \math{\tilde\matA_k}, an approximation to 
\math{\matA_k}, obliviously to $\b$, and use \math{\tilde\matA_k} in the regression.
To construct \math{\tilde\matA_k}, we use 
an algorithm that was previously proposed to  quickly
construct a ``good''
low-rank approximation to a matrix in the spectral norm.
This algorithm is based on the
subspace iteration method~\cite[Sec 8.2.4]{GV12}
and was analyzed in~\cite{RST09},~\cite{HMT11}.

The approach of first approximating \math{\matA_k} is natural and has been
used before, 
for example \cite{xiang2013regularization} for uniform Tikhonov
regularization (\math{\lambda_i=\lambda}).
The main algorithm in~\cite{xiang2013regularization} is similar to ours
without the power iteration, and corresponds
to a random embedding into \math{k+q} dimensions 
before computing 
an approximate basis for the column space of \math{\matA}.
Theorem~1 in
\cite{xiang2013regularization} provides a high probability
bound:
\remove{
$$\frac{\norm{\x_\lambda-\hat\x_\lambda}_2^2}{\norm{\x_\lambda}_2^2}
=O\left((k+q)(q\log q+n-k)\frac{\sigma_{k+1}^2(\matA)}{\sigma_1^2(\matA)}\right).
$$
}
$$\norm{\x_\lambda-\hat\x_\lambda}_2^2
=O\left((k+q)(q\log q+n-k)\gamma_k^2
\right)\cdot{\norm{\x_\lambda}_2^2}
.
$$
Here, \math{\gamma_k=\sigma_{k+1}(\matA)/\sigma_k(\matA)\le1}.
This bound is similar in spirit to our Eqn.~\r{eqnthm2}
in Theorem~\ref{thm1}, except we work with 
SVD-truncated regularization, not uniform Tikhonov regularization,
and we give a stronger \math{O(\varepsilon)} bound.

The approach in~\cite{Vog94} also uses subspace iteration as we do,
with a different choice for the dimension reduction and an orthonormalization
step (see Section 3.1 in~\cite{Vog94}) - this choice is the ``classical subspace
iteration method'' from the numerical linear algebra literature~\cite[Sec 8.2.4]{GV12}. 
However, no theoretical bounds are reported in~\cite{Vog94}.
Iterative SVD-based methods such as the Lanczos iteration were also proposed
in Section 4 in~\cite{Vog94} and~\cite{OS81}. 
These approaches enjoy good empirical behavior, but again,
no theoretical bounds are known. 

In the above two results~\cite{Vog94, OS81}, when we say there are no theoretical 
bounds we mean there are no bounds for the regression setting, as those we provide
in Theorem~\ref{thm1}. However, subspace iteration and Lanczos iteration were extensively 
analyzed before and bounds similar to Lemma~\ref{lem1} are available.

An alternative approach to SVD-truncation
is feature selection or sparsity. 
In this setting, one selects columns from $\matA$ and solves the 
reduced regression with only these columns, resulting in a sparse
solution. See Section 12.2 in~\cite{GV12} for a discussion
of this approach. 
In recent work~\cite{BM13}, we developed a method based on column sampling that
runs 
in $O(m n \min\{m,n\} + n k^3 / \varepsilon^2)$ time and
returns a solution $\hat\x_r \in \R^n$ with 
$r = O(k)$ non-zero entries such that:
$$
\TNorm{\matA \hat\x_r-\b}
\le
\TNorm{ \matA \x_k - \b} + O(1)\norm{\b}_2\norm{\matA-\matA_k}_{\mathrm{F}} \sigma^{-1}_{k}(\matA).
$$
Eqn.~\ref{eqnthm1} in Theorem~\ref{thm1} in the present article, when we remove the
sparsity constraint, is considerably tighter.

A similar bound can be 
obtained
using the Rank-Revealing QR (RRQR) factorization~\cite{CH92a}:
a QR-like 
decomposition is used to select exactly $k$ columns of $\matA$
to obtain a 
sparse solution $\hat\x_k$.
Combining
Eqn.~(12) of~\cite{CH92a} with Strong RRQR~\cite{GE96} 
gives
$$ \TNorm{\x_k - \hat\x_k}  \le 3 \left( \sqrt{4 k(n-k)+1} \right) \sigma_k^{-1}(\matA) \cdot \TNorm{\b}.
$$
Eqn.~\ref{eqnthm2} in Theorem~\ref{thm1} in the present article, when we remove the
sparsity constraint, is considerably tighter.

\section{Preliminaries}\label{sec:pre}
\noindent{\em Basic Notation.}
We use \math{\matA,\matB,\ldots} to denote matrices;
\math{\a,\b,\ldots} to denote column vectors.
$\matI_{n}$ is the $n \times n$
identity matrix;  $\bm{0}_{m \times n}$ is the $m \times n$ matrix of zeros.
We use 
\math{\norm{\matA}_{\mathrm{F}}} for the Frobenius matrix norm 
and \math{\norm{\matA}_2} for the spectral or operator norm:
$ \FNorm{\matA}^2 = \sum_{i,j} \matA_{ij}^2$ and
$\TNorm{\matA} = \max_{\x:\TNorm{\x}=1}\TNorm{\matA \x}$.
By submultiplicativity,
$\TNorm{\matA \matB} \le \TNorm{\matA} \TNorm{\matB}$,
for any $\matA, \matB$. 

\noindent{\em
Singular Value Decomposition and the Pseudo-inverse.} \label{chap24}
The thin (compact) Singular Value Decomposition (SVD) of the matrix $\matA$ with $\rank(\matA) = \rho$ is:
\begin{eqnarray*}
\label{svdA} \matA
         = \underbrace{\left(\begin{array}{cc}
             \matU_{k} & \matU_{\rho-k}
          \end{array}
    \right)}_{\matU_{\matA} \in \R^{m \times \rho}}
    \underbrace{\left(\begin{array}{cc}
             \matSig_{k} & \bf{0}\\
             \bf{0} & \matSig_{\rho - k}
          \end{array}
    \right)}_{\matSig_\matA \in \R^{\rho \times \rho}}
    \underbrace{\left(\begin{array}{c}
             \matV_{k}\transp\\
             \matV_{\rho-k}\transp
          \end{array}
    \right)}_{\matV_\matA\transp \in \R^{\rho \times n}},
\end{eqnarray*}
where \math{\matSig_{\matA}}, a positive diagonal matrix, contains the
singular values in decreasing order: \math{(\matSig_{\matA})_{ii}=\sigma_i(\matA)} 
(we will drop the dependence on \math{\matA} and
use \math{\sigma_i} when the context is 
clear). 
The matrices $\matU_k \in \R^{m \times k}$ and $\matU_{\rho-k} \in \R^{m \times (\rho-k)}$ contain the left singular vectors of~$\matA$; similarly, 
$\matV_k \in \R^{n \times k}$ and $\matV_{\rho-k} \in \R^{n \times (\rho-k)}$ contain the right singular vectors of~$\matA$. 
$\matA_k=\matU_k \matSig_k \matV_k\transp = \matU_k \matU_k\transp\matA = \matA\matV_k \matV_k\transp \in \R^{m \times n}$ 
minimizes \math{\TNorm{\matA - \matX}} over all
matrices \math{\matX \in \R^{m \times n}} of rank at most $k$. 
Note that $\TNorm{\matA} = \sigma_1(\matA)$ and
$\TNorm{\matA-\matA_k} = \sigma_{k+1}(\matA)$. The pseudo-inverse of $\matA$ is
$\pinv{\matA} = \matV_\matA \matSig_\matA^{-1} \matU_\matA\transp 
\in \R^{n \times m}$.
The spectral gap of $\matA$ at \math{k < \rank(\matA)} is \math{\gamma_k=\sigma_{k+1}(\matA)/\sigma_k(\matA)\le1}.

\noindent{\em
Perturbation Theory.}
There exist bounds
on the perturbation of the pseudoinverse and singular values
 of a matrix
upon additive perturbation. Let $\matA, \matB, \matE$ be
 $m \times n$ matrices with $\matB = \matA + \matE$.  
\begin{lemma}[{\cite[Theorem 3.4]{stewart1977perturbation}}]\label{sec:pre1}
If $m \ge n$ and
$\rank(\matA) = \rank(\matB) < \min\{m,n\}:$  
$
\TNorm{\pinv{\matB} - \pinv{\matA}} \le 
2\TNorm{\pinv{\matA}}  \TNorm{\pinv{\matB}}\TNorm{\matE}.
$
\end{lemma}
\remove{
\begin{lemma}[{\cite[Remark 3.2]{xu2011optimal}}]\label{sec:pre1}
$$
\TNorm{\pinv{\matB} - \pinv{\matA}} \le \left(  \TNorm{\pinv{\matA}}  \TNorm{\pinv{\matB}} + \max\{  \TNormS{\pinv{\matA}}, \TNormS{\pinv{\matB}} \} \right) \TNorm{\matE}.
$$
Further, if $\rank(\matA) = \rank(\matB) < \min\{m,n\}$ and $m \ge n,$ 
then \cite[Theorem 3.4]{stewart1977perturbation}:
$$
\TNorm{\pinv{\matB} - \pinv{\matA}} \le 
2\TNorm{\pinv{\matA}}  \TNorm{\pinv{\matB}}\TNorm{\matE}
$$
\end{lemma}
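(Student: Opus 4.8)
\emph{Proof plan.} I would derive both inequalities from the Wedin identity for the pseudo-inverse,
\begin{equation*}
\pinv{\matB} - \pinv{\matA} \;=\; \pinv{\matB}\bigl(\matI - \matA\pinv{\matA}\bigr) \;-\; \bigl(\matI - \pinv{\matB}\matB\bigr)\pinv{\matA} \;-\; \pinv{\matB}\matE\pinv{\matA},
\end{equation*}
which is verified by a short telescoping computation: expand the right-hand side, write $\matE = \matB - \matA$ in the last term, and cancel $\pinv{\matB}\matB\pinv{\matA}$ against $\pinv{\matB}\matA\pinv{\matA}$ to recover $\pinv{\matB} - \pinv{\matA}$. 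Using $\matA\transp(\matI - \matA\pinv{\matA}) = \zeromtx$ and $\pinv{\matB}\tpinv{\matB}\matB\transp = \pinv{\matB}$, the first correction term can also be written $\pinv{\matB}\tpinv{\matB}\matE\transp(\matI - \matA\pinv{\matA})$, and symmetrically the second equals $-(\matI - \pinv{\matB}\matB)\matE\transp\tpinv{\matA}\pinv{\matA}$, so in each a single factor of $\matE$ is exposed.

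For the first bound I would take spectral norms. The last term is at most $\TNorm{\pinv{\matB}}\TNorm{\pinv{\matA}}\TNorm{\matE}$. For the corrections, $\matI - \matA\pinv{\matA}$ and $\matI - \pinv{\matB}\matB$ are orthogonal projectors, hence of norm $\le 1$, while $\TNorm{\pinv{\matB}\tpinv{\matB}} = \TNormS{\pinv{\matB}}$ and $\TNorm{\tpinv{\matA}\pinv{\matA}} = \TNormS{\pinv{\matA}}$; thus each correction term has norm at most $\max\{\TNormS{\pinv{\matA}},\TNormS{\pinv{\matB}}\}\TNorm{\matE}$. What produces the $\max$ in the statement, rather than the sum of the two correction bounds, is that the two correction terms have mutually orthogonal column spaces (one sits in the row space of $\matB$, the other in its orthogonal complement, because of the left factors $\pinv{\matB}$ and $\matI - \pinv{\matB}\matB$) and mutually orthogonal row spaces (the right factors $\matI - \matA\pinv{\matA}$ and $\pinv{\matA}$ have row spaces $\col(\matA)^\perp$ and $\col(\matA)$): for two matrices whose row spaces are orthogonal and whose column spaces are orthogonal, the spectral norm of the sum is the larger of the two spectral norms. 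Combining with the bound for $\pinv{\matB}\matE\pinv{\matA}$ gives the estimate.

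For the second bound, assume $\rank(\matA) = \rank(\matB)$ and let $\matP_{\matA} = \matA\pinv{\matA}$, $\matP_{\matB} = \matB\pinv{\matB}$ be the orthogonal projectors onto $\col(\matA)$, $\col(\matB)$. Since $\pinv{\matB}(\matI - \matP_{\matB}) = \zeromtx$, the first correction term equals $\pinv{\matB}(\matP_{\matB} - \matP_{\matA})$, so $\TNorm{\pinv{\matB}(\matI - \matA\pinv{\matA})} \le \TNorm{\pinv{\matB}}\,\TNorm{\matP_{\matB} - \matP_{\matA}}$. Write $\matP_{\matB} - \matP_{\matA} = \matP_{\matB}(\matI - \matP_{\matA}) - (\matI - \matP_{\matB})\matP_{\matA}$; using $\matB = \matA + \matE$ one gets $(\matI - \matP_{\matB})\matP_{\matA} = -(\matI - \matP_{\matB})\matE\pinv{\matA}$ and $(\matI - \matP_{\matA})\matP_{\matB} = (\matI - \matP_{\matA})\matE\pinv{\matB}$, so (taking a transpose in the latter) the two pieces are bounded by $\TNorm{\matE}\TNorm{\pinv{\matA}}$ and $\TNorm{\matE}\TNorm{\pinv{\matB}}$. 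Because $\rank(\matA) = \rank(\matB)$, $\col(\matA)$ and $\col(\matB)$ are equidimensional, and for equidimensional subspaces the CS decomposition gives $\TNorm{(\matI - \matP_{\matA})\matP_{\matB}} = \TNorm{(\matI - \matP_{\matB})\matP_{\matA}} = \TNorm{\matP_{\matB} - \matP_{\matA}}$; hence $\TNorm{\matP_{\matB} - \matP_{\matA}} \le \TNorm{\matE}\min\{\TNorm{\pinv{\matA}},\TNorm{\pinv{\matB}}\}$ and so $\TNorm{\pinv{\matB}(\matI - \matA\pinv{\matA})} \le \TNorm{\pinv{\matA}}\TNorm{\pinv{\matB}}\TNorm{\matE}$. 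The same argument applied to $\matA\transp,\matB\transp$ bounds $\TNorm{(\matI - \pinv{\matB}\matB)\pinv{\matA}}$ by $\TNorm{\pinv{\matA}}\TNorm{\pinv{\matB}}\TNorm{\matE}$ as well; the two correction terms still have orthogonal row and column spaces, so their sum contributes at most $\TNorm{\pinv{\matA}}\TNorm{\pinv{\matB}}\TNorm{\matE}$, and adding $\pinv{\matB}\matE\pinv{\matA}$ yields the factor $2$. (The extra hypothesis $\rank(\matA)=\rank(\matB)<\min\{m,n\}$ just matches the cited source; if the common rank equals $\min\{m,n\}$ one of the projector differences vanishes and the argument only simplifies.)

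The step I expect to be the crux is the equidimensional CS identity $\TNorm{(\matI - \matP_{\matA})\matP_{\matB}} = \TNorm{(\matI - \matP_{\matB})\matP_{\matA}}$: the symmetry of the principal-angle ``sines'' measured from either side is precisely what lets one replace a factor $\TNormS{\pinv{\matA}}$ or $\TNormS{\pinv{\matB}}$ by the product $\TNorm{\pinv{\matA}}\TNorm{\pinv{\matB}}$, and it fails without the rank hypothesis --- which is why the first bound must retain the quadratic term. The rest (the telescoping identity, the auxiliary identities such as $\matA\transp(\matI - \matA\pinv{\matA}) = \zeromtx$, and the ``norm of a sum is the max'' fact for matrices with orthogonal row and column spaces) is routine bookkeeping.
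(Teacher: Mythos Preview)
The paper does not prove this lemma; it merely cites it from \cite{xu2011optimal} and \cite{stewart1977perturbation} in the Preliminaries section, so there is no ``paper's own proof'' to compare against. Your argument is the standard Wedin--Stewart route and is correct: the three-term identity, the rewriting of the correction terms to expose a single factor of $\matE$, the use of orthogonal row/column spaces to turn a sum of two correction-term bounds into a $\max$, and the equidimensional CS identity $\TNorm{(\matI-\matP_\matA)\matP_\matB}=\TNorm{(\matI-\matP_\matB)\matP_\matA}=\TNorm{\matP_\matB-\matP_\matA}$ are all valid and used in exactly the right places. This is essentially the proof one finds in the cited sources, so there is nothing to add.
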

}

\begin{lemma}[Weyl's inequality {\cite[Corollary 7.3.8]{HJ85}}]\label{lem:pert2}
$
| \sigma_i\left(\matB\right)- \sigma_i\left(\matA\right)  |  \le \TNorm{\matE},
\quad
\text{for } i=1,2,\ldots, \min( m,n).
$
\end{lemma}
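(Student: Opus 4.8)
The plan is to obtain Weyl's inequality for singular values from the Courant--Fischer min--max characterization, the standard route for such perturbation statements. First I would reduce to the case $m \ge n$: a matrix and its transpose have the same nonzero singular values, and $\matB\transp = \matA\transp + \matE\transp$ with $\TNorm{\matE\transp}=\TNorm{\matE}$, so only one orientation needs to be handled, and the index range $i=1,\dots,\min(m,n)$ becomes $i=1,\dots,n$. With $m\ge n$, I would invoke (or re-derive quickly from the SVD) the variational formula
$$
\sigma_i(\matM)\;=\;\min_{\dim\calS = n-i+1}\ \max_{\x\in\calS,\ \TNorm{\x}=1}\TNorm{\matM\x}
$$
for any $m\times n$ matrix $\matM$ and $1\le i\le n$, the minimum running over subspaces $\calS\subseteq\R^n$. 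The ``$\le$'' direction takes $\calS=\mathrm{span}\{\v_i,\dots,\v_n\}$, where the $\v_j$ are the right singular vectors of $\matM$, since $\TNorm{\matM\x}\le\sigma_i(\matM)$ there; the ``$\ge$'' direction uses that any $\calS$ of dimension $n-i+1$ meets $\mathrm{span}\{\v_1,\dots,\v_i\}$ nontrivially, because $(n-i+1)+i > n$, and $\TNorm{\matM\x}\ge\sigma_i(\matM)\TNorm{\x}$ on that intersection.

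The perturbation step is then a single line: for every unit vector $\x$, the reverse triangle inequality together with $\TNorm{\matE\x}\le\TNorm{\matE}$ gives
$$
\bigl|\,\TNorm{\matB\x}-\TNorm{\matA\x}\,\bigr|\;\le\;\TNorm{(\matB-\matA)\x}\;=\;\TNorm{\matE\x}\;\le\;\TNorm{\matE}.
$$
Letting $\calS^\star$ attain the minimum for $\sigma_i(\matA)$ and restricting the inner maximisation for $\sigma_i(\matB)$ to $\calS^\star$,
$$
\sigma_i(\matB)\;\le\;\max_{\x\in\calS^\star,\ \TNorm{\x}=1}\TNorm{\matB\x}\;\le\;\max_{\x\in\calS^\star,\ \TNorm{\x}=1}\bigl(\TNorm{\matA\x}+\TNorm{\matE}\bigr)\;=\;\sigma_i(\matA)+\TNorm{\matE}.
$$
Interchanging $\matA$ and $\matB$ (writing $\matA=\matB+(-\matE)$, and $\TNorm{-\matE}=\TNorm{\matE}$) gives $\sigma_i(\matA)\le\sigma_i(\matB)+\TNorm{\matE}$, and combining the two bounds yields the claimed $|\sigma_i(\matB)-\sigma_i(\matA)|\le\TNorm{\matE}$.

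\textbf{Main obstacle.} There is no genuine obstacle here --- the statement is classical --- so the only things to be careful about are the bookkeeping in the min--max formula (the subspace dimension $n-i+1$ and the nontrivial-intersection dimension count) and invoking the \emph{reverse} triangle inequality so that a two-sided bound, rather than a one-sided one, comes out. An equally clean alternative would be to pass to the Hermitian dilation $\widetilde{\matM}=\bigl(\begin{smallmatrix}\bm{0}&\matM\\ \matM\transp&\bm{0}\end{smallmatrix}\bigr)$, whose eigenvalues are $\pm\sigma_i(\matM)$ together with zeros; since $\widetilde{\matB}=\widetilde{\matA}+\widetilde{\matE}$ and $\TNorm{\widetilde{\matE}}=\TNorm{\matE}$, the result then drops out of Weyl's inequality for eigenvalues of Hermitian matrices, which is itself immediate from Courant--Fischer.
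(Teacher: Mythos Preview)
Your proof is correct. The paper, however, does not give its own proof of this lemma at all: it merely states the inequality and cites \cite[Corollary~7.3.8]{HJ85}. So there is nothing to compare against in the paper itself.

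For context, the cited result in Horn and Johnson is obtained precisely via your suggested alternative: one forms the Hermitian dilation $\widetilde{\matM}=\bigl(\begin{smallmatrix}\bm{0}&\matM\\ \matM\transp&\bm{0}\end{smallmatrix}\bigr)$, whose eigenvalues are $\pm\sigma_i(\matM)$ padded with zeros, and then reads off the singular-value bound from Weyl's eigenvalue inequality for Hermitian matrices. Your primary route through the min--max characterization of singular values is equally standard and arguably more self-contained, since it avoids invoking the eigenvalue version as a black box. Both approaches are short; the dilation route is what the citation points to, while your Courant--Fischer argument works directly on the singular values without the detour through the $(m+n)\times(m+n)$ symmetric matrix.
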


\noindent{\em Random Matrix Theory.}\label{sec:random}
There exist results bounding the top and bottom singular
values of a random Gaussian matrix.
\begin{lemma}[Norm of a Gaussian Matrix~\cite{DavidsonSzarek01}]\label{lem:gauss1}
Let $\matX \in \R^{n \times m}$ be a matrix of i.i.d. standard Gaussian random variables, where $n \geq m.$ 
Then, for $t \geq 4$,
$
\Prob\{ \sigma_1(\matX) \ge t n^{\frac12} \} \ge e^{-n t^2/8}.
$
\end{lemma}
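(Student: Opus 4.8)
The plan is to lower-bound this large-deviation probability in two moves: first replace the operator norm by a scalar statistic with an explicit distribution, and then bound the tail of that statistic from below by an exponential change of measure (the Cram\'er lower bound), which is the standard route to \emph{matching lower bounds} for tail probabilities of sums of i.i.d.\ variables. The whole point is that the target is a lower bound on a probability, so every reduction must preserve the $\ge$ direction: I will only ever replace $\sigma_1(\matX)$ by a quantity it dominates, and only ever replace the chi-squared tail by something it exceeds.

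For the first move I keep only one direction. Since $\sigma_1(\matX)=\max_{\TNorm{\v}=1}\TNorm{\matX\v}\ge\TNorm{\matX\e_1}$, the norm of the first column already certifies largeness, and therefore $\Prob\{\sigma_1(\matX)\ge t n^{1/2}\}\ge\Prob\{\TNorm{\matX\e_1}\ge t n^{1/2}\}$. The column $\matX\e_1$ consists of $n$ i.i.d.\ standard Gaussians, so $\TNormS{\matX\e_1}=\sum_{i=1}^{n}\matX_{i1}^2$ is chi-squared with $n$ degrees of freedom, and the claim reduces to the clean scalar estimate $\Prob\{\chi_n^2\ge t^2 n\}\ge e^{-n t^2/8}$. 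Because $\sigma_1$ dominates any single column norm, this discarding of directions is harmless for a lower bound, and one can always recover more of $\sigma_1$ later if the single column turns out to be too lossy.

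For the second move I tilt the law of the summands $\matX_{i1}^2$ by their moment generating function $\Expect{e^{\theta\matX_{i1}^2}}=(1-2\theta)^{-1/2}$ for $\theta<\tfrac12$, choosing $\theta_\star=\tfrac12(1-t^{-2})$ so that under the tilted measure the mean of $\tfrac1n\chi_n^2$ is exactly $t^2$; the event $\{\chi_n^2\ge t^2 n\}$ then has tilted probability bounded below by a constant, and untilting yields $\Prob\{\chi_n^2\ge t^2 n\}\ge c\,n^{-1/2}e^{-n I(t^2)}$ with rate $I(a)=\tfrac12(a-1-\ln a)$. The main obstacle, and the only nonroutine step, is reconciling this intrinsic exponent with the advertised one: establishing $\ge e^{-nt^2/8}$ amounts to controlling $I(t^2)$ against $t^2/8$, that is, to tracking the logarithmic correction $\ln t$ and absorbing the polynomial prefactor $n^{-1/2}$ into the exponential so that the final bound is a clean $e^{-nt^2/8}$ with no leftover factors. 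I expect the hypothesis $t\ge 4$ to be exactly what makes the constant $1/8$ (rather than some larger constant) attainable, so I would devote essentially all of the effort to a careful finite-$n$ version of the tilting estimate valid uniformly for $t\ge 4$, leaving the two norm reductions as bookkeeping.
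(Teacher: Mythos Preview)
The paper does not prove this lemma at all: it is quoted from~\cite{DavidsonSzarek01} as a preliminary and used only through the black-box invocation in Lemma~\ref{lem1}/Lemma~\ref{lem2}. So there is no ``paper's proof'' to compare your argument against.

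More importantly, the inequality direction in the stated lemma is a typo: the intended bound is the \emph{upper} tail estimate
\[
\Prob\{\sigma_1(\matX)\ge t\,n^{1/2}\}\ \le\ e^{-nt^2/8}\qquad(t\ge 4),
\]
which is exactly what Davidson--Szarek provide (via $\Prob\{\sigma_1(\matX)>\sqrt n+\sqrt m+s\}\le e^{-s^2/2}$, then setting $s=(t-2)\sqrt n$ and checking $(t-2)^2/2\ge t^2/8$ iff $t\ge 4$). This is also the only direction consistent with how the paper uses the lemma, namely to obtain a success probability of the form $1-e^{-n}-2.35\delta$.

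Your proposal takes the printed ``$\ge$'' at face value and tries to prove a lower bound; that program cannot succeed because the statement with ``$\ge$'' is false for every $t>4$. Concretely, your reduction $\Prob\{\sigma_1(\matX)\ge t\sqrt n\}\ge\Prob\{\chi_n^2\ge t^2 n\}$ is correct, but the tilting step then asks for $\Prob\{\chi_n^2\ge t^2 n\}\ge e^{-nt^2/8}$, and this is the wrong way around: the Cram\'er rate you yourself compute is $I(t^2)=\tfrac12(t^2-1-2\ln t)$, which for $t=4$ already equals about $6.11$, far larger than $t^2/8=2$, so $\Prob\{\chi_n^2\ge t^2 n\}\approx e^{-nI(t^2)}\ll e^{-nt^2/8}$. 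Even without the column-norm reduction, the Davidson--Szarek upper bound $e^{-n(t-2)^2/2}$ is strictly smaller than $e^{-nt^2/8}$ for all $t>4$, directly contradicting the claimed lower bound. In short: there is no gap in your mechanics, but you are proving a statement that is not true; read the lemma with ``$\le$'' and there is nothing left to do beyond citing~\cite{DavidsonSzarek01}.
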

\begin{lemma}[Invertibility of a Gaussian Matrix~\cite{smoothedanalysis06}]\label{lem:gauss2}
Let $\matX \in \R^{n \times n}$ be a matrix with i.i.d. standard Gaussian random variables. Then,
for $\delta > 0,$
$
\Prob\{ \sigma_n(\matX) \le \delta n^{-\frac12} \}\leq 2.35 \delta.
$
\end{lemma}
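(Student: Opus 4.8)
\medskip
\noindent\emph{Proof plan.}
I would first pass to the reciprocal quantity: $\matX$ is almost surely invertible, so $\TNorm{\pinv{\matX}}=1/\sigma_n(\matX)$, and writing $t=\sqrt n/\delta$ the claim is equivalent to the operator-norm tail bound $\Prob\{\TNorm{\pinv{\matX}}\ge t\}\le 2.35\sqrt n/t$. Throughout I would use the rotational invariance of the ensemble: for any fixed orthogonal $\matQ$ both $\matQ\matX$ and $\matX\matQ$ have the same law as $\matX$; in particular the rows $X_1,\dots,X_n$ of $\matX$ are i.i.d.\ $N(0,\matI_n)$, and each row is independent of the subspace spanned by the others, so $\mathrm{dist}(X_\ell,\mathrm{span}\{X_j:j\ne\ell\})$ is distributed as $|g|$ with $g\sim N(0,1)$.

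\medskip
\noindent\emph{Main idea.}
The naive route --- $\TNorm{\pinv{\matX}}\le\FNorm{\pinv{\matX}}$ together with the fact that the $\ell$-th column of $\pinv{\matX}$ has norm $1/\mathrm{dist}(X_\ell,\mathrm{span}\{X_j:j\ne\ell\})$ --- combined with a union bound gives only a tail of order $n^{3/2}/t$, a factor $n$ worse than the truth. The device of Sankar--Spielman--Teng~\cite{smoothedanalysis06} is to probe $\pinv{\matX}$ with a unit vector $\b$ drawn uniformly from the sphere, independently of $\matX$. If $\pinv{\matX}=\sum_i s_i\p_i\q_i\transp$ with $s_1=\TNorm{\pinv{\matX}}$, then, by orthonormality of the $\p_i$, deterministically $\TNorm{\pinv{\matX}\b}\ge s_1\,|\q_1\transp\b|$, so for any $\theta\in(0,1)$,
\begin{equation*}
\Prob\{\TNorm{\pinv{\matX}\b}\ge\theta t\}\ \ge\ \Prob\{\TNorm{\pinv{\matX}}\ge t,\ |\q_1\transp\b|\ge\theta\}\ =\ \Prob\{|b_1|\ge\theta\}\cdot\Prob\{\TNorm{\pinv{\matX}}\ge t\},
\end{equation*}
the last equality because $\b$ is independent of $\q_1$ and rotationally symmetric, so $\q_1\transp\b$ has the law of the first coordinate $b_1$ of a uniform unit vector. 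On the other hand, the rotational invariance of $\matX$ gives $\TNorm{\pinv{\matX}\b}\stackrel{d}{=}\TNorm{\pinv{\matX}\e_1}=1/\mathrm{dist}(X_1,\mathrm{span}\{X_2,\dots,X_n\})\stackrel{d}{=}1/|g|$, so $\Prob\{\TNorm{\pinv{\matX}\b}\ge\theta t\}=\Prob\{|g|\le 1/(\theta t)\}\le\sqrt{2/\pi}\,(\theta t)^{-1}$. Combining the two displays,
\begin{equation*}
\Prob\{\TNorm{\pinv{\matX}}\ge t\}\ \le\ \frac{\sqrt{2/\pi}}{\theta\,\Prob\{|b_1|\ge\theta\}}\cdot\frac1t ,
\end{equation*}
and it remains only to choose $\theta$ to minimise the prefactor.

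\medskip
\noindent\emph{The crux.}
The conceptual obstacle is precisely the auxiliary-vector step: it is the alignment probability $\Prob\{|b_1|\ge\theta\}$, which stays bounded below by an absolute constant even at the optimal scale $\theta\sim 1/\sqrt n$, that trades the unavoidable $\sqrt n$ loss in aligning $\b$ with $\q_1$ for the $\sqrt n$ in the final estimate --- with no extra factor of $n$. The quantitative obstacle is pinning the constant: since a coordinate of a uniform unit vector in $\R^n$ obeys $\sqrt n\,b_1\to N(0,1)$ in law, maximising $\theta\,\Prob\{|b_1|\ge\theta\}$ amounts (up to the $\sqrt n$) to maximising $\tau\,\Prob\{|N(0,1)|\ge\tau\}$, whose maximiser is $\tau^\star\approx0.75$ with $\tau^\star\Prob\{|N(0,1)|\ge\tau^\star\}\approx0.34$, giving prefactor $\sqrt{2/\pi}/0.34\approx2.35$; one also checks that the finite-$n$ quantities are no worse, the Gaussian limit being extremal. (An alternative, purely distributional route writes $\sigma_n(\matX)^2=\lambda_{\min}(\matX\transp\matX)$ and invokes Edelman's closed form for the density of the smallest eigenvalue of a real Wishart matrix with $n$ degrees of freedom, which is $O(\sqrt{x})$ near the origin with an explicit constant; I would keep the first route since it is self-contained.)
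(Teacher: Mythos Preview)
The paper does not prove this lemma; it is stated in the preliminaries and attributed to~\cite{smoothedanalysis06}, so there is no in-paper argument to compare against.  Your outline is essentially the Sankar--Spielman--Teng argument itself and the main steps are correct: the identity $\TNorm{\pinv{\matX}\e_1}=1/\mathrm{dist}(X_1,\mathrm{span}\{X_j:j\ne 1\})$, the distributional reduction $\TNorm{\pinv{\matX}\b}\stackrel{d}{=}\TNorm{\pinv{\matX}\e_1}\stackrel{d}{=}1/|g|$ via left rotational invariance, and the decoupling $\Prob\{\TNorm{\pinv{\matX}}\ge t,\ |\q_1\transp\b|\ge\theta\}=\Prob\{|b_1|\ge\theta\}\cdot\Prob\{\TNorm{\pinv{\matX}}\ge t\}$ (by conditioning on $\matX$, noting $\q_1\transp\b$ then has the law of a fixed coordinate of $\b$) are all valid as written.

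The only part left genuinely open is the last line, ``one also checks that the finite-$n$ quantities are no worse, the Gaussian limit being extremal.''  As it stands your argument gives $\Prob\{\sigma_n(\matX)\le\delta n^{-1/2}\}\le \sqrt{2/\pi}\big/\big(\sqrt n\,\theta\,\Prob\{|b_1|\ge\theta\}\big)\cdot\delta$, and to reach the constant $2.35$ for every $n$ you must show that $\sup_\theta \sqrt n\,\theta\,\Prob\{|b_1|\ge\theta\}\ge \sqrt{2/\pi}/2.35\approx 0.34$ uniformly in $n$, not merely in the limit.  This is true, but it requires an explicit computation with the density $c_n(1-x^2)^{(n-3)/2}$ of a sphere coordinate (or the beta distribution of $b_1^2$); the convergence $\sqrt n\,b_1\Rightarrow N(0,1)$ alone does not supply the uniform lower bound.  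If you only need the result up to an absolute constant, this step can be bypassed; if you want exactly $2.35$, you should either carry out that calculation or defer to the cited source.
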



\section{Main Result}~\label{sec:main}
We use an approximation
\math{\tilde\matA_k} (instead of \math{\matA_k}) 
and minimize \math{\norm{\tilde\matA_k\x-\b}_2} over \math{\x}. 
The algorithm is summarized below.
\begin{enumerate}
\item 
Compute \math{\matQ \in \R^{m \times k}}, an orthonormal basis 
for the columns of 
$
(\matA \matA\transp)^p \matA \matS \in \R^{m \times k},
$
where \math{p\ge0} and \math{\matS}
is an \math{n\times k} matrix of i.i.d. standard Gaussians.
\item 
Compute \math{\tilde\matA_k=\matQ\matQ\transp\matA} and 
\math{\tilde\x_k=\pinv{\tilde\matA}_k\b}.
\end{enumerate}
Careful implementation 
makes the algorithm efficient.
In Step~1, we compute the matrix
products in  \math{(\matA \matA\transp)^p \matA \matS} from right
to left to ensure a running time of \math{O(\nnz(\matA)kp)}; the result is 
an \math{m\times k} matrix, and a QR-factorization in 
 time \math{O(mk^2)}
gives
\math{\matQ}.
In Step 2, we need the SVD 
\math{\tilde\matA_k=\tilde\matU_k\tilde\matSig_k\tilde\matV_k\transp}.
Instead, we compute the SVD of 
\math{\matQ\transp\matA=\matU_{\matQ\transp\matA}\matSig_{\matQ\transp\matA}\matV\transp_{\matQ\transp\matA}}, in
\math{O(\nnz(\matA)k+nk^2)} time\footnote{It takes $O(\nnz(\matA)k)$ time for the matrix multiplication $\matQ\transp\matA,$
and $O(nk^2)$ time to compute the SVD of $\matQ\transp\matA$.}. 
Then, 
\math{\tilde\matA_k=\matQ\matU_{\matQ\transp\matA}\matSig_{\matQ\transp\matA}\matV\transp_{\matQ\transp\matA}}, from which we  read off the 
SVD of~\math{\tilde\matA_k} because \math{\matQ\matU_{\matQ\transp\matA}}
is orthonormal:
\math{\tilde\matU_k=\matQ\matU_{\matQ\transp\matA},\ 
\tilde\matSig_k=\matSig_{\matQ\transp\matA},\ 
\tilde\matV_k=\matV_{\matQ\transp\matA}.
}
Now, 
\math{\pinv{\tilde\matA}_k
=\tilde\matV_k\tilde\matSig_k^{-1}\tilde\matU_k\transp} and 
$
\tilde\x_k=\tilde\matV_k\tilde\matSig_k^{-1}\tilde\matU_k\transp\b,
$
which is computed in \math{O(mk+nk+k^2)} time. The dominant terms in the running
time 
are \math{O(\nnz(\matA)kp+(m+n)k^2)}.

We control the accuracy of the
algorithm by choosing \math{p} appropriately. 
A larger \math{p} gives a better error.
Let \math{0 < \varepsilon < 1} be an error parameter and 
recall that the spectral gap of
\math{\matA} at \math{k} is 
$
\gamma_k= \sigma_{k+1}(\matA) / \sigma_{k}(\matA) \le 1.
$
The next theorem quantifies how the error depends on \math{p}. Roughly
speaking, setting
$
p=O\left({ \ln(\varepsilon/n)}/{\ln(\gamma_k)} \right)
$
suffices
to give additive
error \math{\varepsilon\norm{\b}_2}. 

\begin{theorem}\label{thm1}
Fix \math{\matA\in\R^{m\times n},
\ \b\in\R^m},  $k < \rank(\matA)$, and \math{\varepsilon,\delta\in(0,1)}.
Choose \math{p} in our algorithm to satisfy
$$
p\ge \frac{ \ln( \varepsilon \cdot \delta \cdot  \frac{\sigma_k^2} {\sigma_1^2} \cdot \frac{1}{12n} )} {\ln\left(\gamma_k^2\right)}.
$$
Let 
\math{\tilde\x_k=\pinv{\tilde\matA}_k\b} and $\x_k = \pinv{\matA}_k \b$ be the exact SVD-truncated solution.
Then, with probability at least $1- e^{-2n} - 2.35\delta:$
\begin{equation}\label{eqnthm1}
\TNorm{\matA \tilde\x_{k} - \b} \le \TNorm{\matA \x_{k} - \b} +\varepsilon \cdot \TNorm{\b},
\end{equation}
and
\begin{equation}\label{eqnthm2}
\frac{\TNorm{\x_k - \tilde\x_k}}{\TNorm{ \x_k }} \le 
\frac{4}{3}
\cdot 
\varepsilon.
\end{equation}
\end{theorem}	
The error in~\r{eqnthm1} is additive, and in Section~\ref{sec:lower} we
show that this 
is unavoidable when, as we do, one solves the regression via
an approximation \math{\tilde\matA_k} which is constructed obliviously to~$\b$.

\subsection{Proof of Theorem~\ref{thm1}}

Recall that \math{\matA_k=\matU_k\matU_k\transp\matA} and
\math{\x_k=\pinv{\matA}_k\b}. By our construction of $\tilde\matA_k$
and $\tilde\matU_k$,
\math{\tilde\matA_k=\tilde\matU_k\tilde\matU_k\transp\matA}, with 
\math{\tilde\x_k=\pinv{\tilde\matA}_k\b}. We first
quantify the additive error. By the 
triangle inequalty,
\eqan{
\TNorm{\matA \tilde\x_{k} - \b}
&\le&
\TNorm{ \matA\x_k  - \b}+\Delta,
}
where 
$$
\Delta=\TNorm{\matA \tilde\x_{k} - \matA\x_k}=
\norm{\matA(\pinv{\tilde\matA}_{k} - \pinv{\matA}_k) \b}_2. 
$$
We need to
upper bound \math{\Delta}. By submultiplicativity,
\eqar{
\TNorm{\tilde\matA_{k}-\matA_{k}}
&=&
\TNorm{(\tilde\matU_k\tilde\matU_k\transp-\matU_k\matU_k\transp)\matA}
\nonumber\\
&\le&
\TNorm{\matA}\TNorm{\tilde\matU_k\tilde\matU_k\transp-\matU_k\matU_k\transp}
\label{eq:perturb1}. 
}
\begin{lemma}\label{lem:basic1}
\math{\Delta\le
\frac{2\sigma_1^2(\matA)}{\sigma_k(\tilde\matA)\sigma_k(\matA)}
\TNorm{\tilde\matU_k\tilde\matU_k\transp-\matU_k\matU_k\transp} 
\TNorm{\b}
}.
\end{lemma}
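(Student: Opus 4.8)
The plan is to turn $\Delta$ into a pseudo-inverse perturbation and then invoke the sharp (rank-preserving) estimate of Lemma~\ref{sec:pre1}. First, by submultiplicativity of the spectral norm,
$$
\Delta=\TNorm{\matA(\pinv{\tilde\matA}_{k}-\pinv{\matA}_k)\b}\le \TNorm{\matA}\,\TNorm{\pinv{\tilde\matA}_{k}-\pinv{\matA}_k}\,\TNorm{\b}=\sigma_1(\matA)\,\TNorm{\pinv{\tilde\matA}_{k}-\pinv{\matA}_k}\,\TNorm{\b}.
$$
So it suffices to bound $\TNorm{\pinv{\tilde\matA}_{k}-\pinv{\matA}_k}$, and I would do this with the second inequality of Lemma~\ref{sec:pre1}, which requires $m\ge n$ (true by assumption) and $\rank(\matA_k)=\rank(\tilde\matA_k)<\min\{m,n\}$.

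The only part needing care is verifying $\rank(\tilde\matA_k)=k$. Since $k<\rank(\matA)$ we have $\rank(\matA_k)=k$ outright. For $\tilde\matA_k=\matQ\matQ\transp\matA$, note its columns lie in $\col(\matA)=\col(\matU_\matA)$, so $\matQ=\matU_\matA\matW$ with $\matW\transp\matW=\matI$ (from $\matQ\transp\matQ=\matI$), hence $\tilde\matA_k=\matU_\matA\matW\matW\transp\matSig_\matA\matV_\matA\transp$; because $\matU_\matA$ has full column rank, $\matSig_\matA$ is invertible, and $\matV_\matA\transp$ has full row rank, $\rank(\tilde\matA_k)=\rank(\matW)$. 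Finally $\rank(\matW)=k$ almost surely over the Gaussian $\matS$: the range of $\matQ$ equals that of $(\matA\matA\transp)^p\matA\matS=\matU_\matA\matSig_\matA^{2p+1}(\matV_\matA\transp\matS)$, and $\matV_\matA\transp\matS$ is a $\rho\times k$ Gaussian matrix with $\rho>k$, so it has rank $k$ with probability $1$. This is the one genuinely non-routine step, and it is precisely what lets me use the clean factor-$2$ bound instead of the weaker first inequality of Lemma~\ref{sec:pre1}.

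Granting the hypotheses, Lemma~\ref{sec:pre1} gives $\TNorm{\pinv{\tilde\matA}_{k}-\pinv{\matA}_k}\le 2\,\TNorm{\pinv{\matA}_k}\,\TNorm{\pinv{\tilde\matA}_k}\,\TNorm{\tilde\matA_k-\matA_k}$. Because $\matA_k$ has nonzero singular values $\sigma_1(\matA)\ge\cdots\ge\sigma_k(\matA)$, $\TNorm{\pinv{\matA}_k}=1/\sigma_k(\matA)$, and similarly $\TNorm{\pinv{\tilde\matA}_k}=1/\sigma_k(\tilde\matA_k)$ (this is the quantity written $\sigma_k(\tilde\matA)$ in the statement). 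Substituting the bound~(\ref{eq:perturb1}), i.e. $\TNorm{\tilde\matA_k-\matA_k}\le\sigma_1(\matA)\,\TNorm{\tilde\matU_k\tilde\matU_k\transp-\matU_k\matU_k\transp}$, and collecting the two factors of $\sigma_1(\matA)$ (one from $\TNorm{\matA}$ in $\Delta$, one from $\TNorm{\matA}$ inside $\TNorm{\tilde\matA_k-\matA_k}$) yields
$$
\Delta\le \sigma_1(\matA)\cdot\frac{2}{\sigma_k(\matA)\,\sigma_k(\tilde\matA_k)}\cdot\sigma_1(\matA)\,\TNorm{\tilde\matU_k\tilde\matU_k\transp-\matU_k\matU_k\transp}\cdot\TNorm{\b}=\frac{2\sigma_1^2(\matA)}{\sigma_k(\tilde\matA_k)\sigma_k(\matA)}\,\TNorm{\tilde\matU_k\tilde\matU_k\transp-\matU_k\matU_k\transp}\,\TNorm{\b},
$$
which is the claim. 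I expect no further obstacles beyond the rank verification above.
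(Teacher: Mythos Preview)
Your proof is correct and follows the paper's argument line by line: submultiplicativity, then the rank-preserving bound from Lemma~\ref{sec:pre1}, then Eqn.~(\ref{eq:perturb1}). The only difference is that you explicitly verify $\rank(\tilde\matA_k)=k$, a hypothesis the paper tacitly uses (it even asserts the lemma ``holds no matter what $\tilde\matU_k$ is'') without checking; your extra paragraph is a genuine improvement in rigor, though note that once $\matQ$ has $k$ orthonormal columns and $\col(\matQ)\subseteq\col(\matU_\matA)$, the identity $\matW\transp\matW=\matI_k$ already forces $\rank(\matW)=k$, so the Gaussian argument is really only needed to guarantee that $(\matA\matA\transp)^p\matA\matS$ has full column rank in the first place.
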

\begin{proof} We manipulated $\Delta$ as follows: 
\eqan{
\Delta
& = & 
\TNorm{\matA(\pinv{\tilde\matA}_{k}-\pinv{\matA}_{k}) \b}\\
& \le & 
\TNorm{\matA}\TNorm{\pinv{\tilde\matA}_{k}-\pinv{\matA}_{k}}\TNorm{\b}\\
& \le & 
2\TNorm{\matA}\TNorm{\pinv{\tilde\matA}_{k}}
\TNorm{\pinv{\matA}_{k}}\TNorm{\tilde\matA_{k}-\matA_{k}}\TNorm{\b}\\
& \le & 
\frac{2\sigma_1^2(\matA)}{\sigma_k(\tilde\matA)\sigma_k(\matA)}
\TNorm{\tilde\matU_k\tilde\matU_k\transp-\matU_k\matU_k\transp} 
\TNorm{\b}\\
}
The first inequality uses submultiplicativity; 
the second uses Lemma~\ref{sec:pre1}; and, the last
uses Eqn.~\r{eq:perturb1}. 
\end{proof}
Lemma~\ref{lem:basic1} holds no matter what $\tilde{\matU}_k$ is.
The difference in the projection operators
\math{\TNorm{\tilde\matU_k\tilde\matU_k\transp - \matU_k\matU_k\transp}}
plays an important role in our bounds.
Our algorithm constructs $\tilde{\matU}_k$ for which this error term can
be bounded.
A similar application of the power iteration was analyzed for spectral
clustering in~\cite{GKB13}.
For the specific 
\math{\tilde\matU_k} returned by 
our algorithm, the difference in the projection operators 
can be bounded with high probability.
\begin{lemma}~\cite{GKB13}\label{lem1}
Fix \math{\varepsilon,\delta\in(0,1)}.
If \math{p\ge\ln(\varepsilon\delta/4n)/\ln(\gamma_k^2)}, then with
probability at least \math{1-e^{-2n}-2.35\delta},
$$
\TNorm{\tilde\matU_k\tilde\matU_k\transp - \matU_k\matU_k\transp} 
\le 
\varepsilon.
$$
\end{lemma}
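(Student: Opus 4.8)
The plan is to reduce this statement to the textbook analysis of randomized subspace iteration, the key observation being that the projector produced by the algorithm is literally $\matQ\matQ\transp$. Indeed, since $\tilde\matU_k=\matQ\matU_{\matQ\transp\matA}$ with $\matU_{\matQ\transp\matA}$ a $k\times k$ orthogonal matrix, $\tilde\matU_k$ and $\matQ$ span the same column space, so $\tilde\matU_k\tilde\matU_k\transp=\matQ\matQ\transp$; hence it suffices to bound $\TNorm{\matQ\matQ\transp-\matU_k\matU_k\transp}$, where $\matQ$ is an orthonormal basis for the columns of $(\matA\matA\transp)^p\matA\matS=\matU_\matA\matSig_\matA^{2p+1}\matV_\matA\transp\matS$. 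In other words the algorithm is just a single random embedding applied to the matrix whose singular vectors are those of $\matA$ but whose singular values are raised to the power $2p+1$, and I would invoke the power-iteration argument of~\cite{HMT11,GKB13} in exactly this form.

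For the deterministic core, I would partition the test matrix in the right singular basis of $\matA$: set $\matOmega_1=\matV_k\transp\matS\in\R^{k\times k}$ and $\matOmega_2=\matV_{\rho-k}\transp\matS\in\R^{(\rho-k)\times k}$, so that $\matY:=(\matA\matA\transp)^p\matA\matS=\matU_k\matSig_k^{2p+1}\matOmega_1+\matU_{\rho-k}\matSig_{\rho-k}^{2p+1}\matOmega_2$. On the probability-one event that $\matOmega_1$ is invertible, right-multiplying $\matY$ by $\matOmega_1^{-1}\matSig_k^{-(2p+1)}$ shows that the columns of $\matU_k+\matU_{\rho-k}\mat{F}$ lie in $\col(\matQ)$, where $\mat{F}=\matSig_{\rho-k}^{2p+1}\matOmega_2\matOmega_1^{-1}\matSig_k^{-(2p+1)}$. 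Applying $\matI-\matQ\matQ\transp$ and using orthonormality of the singular vectors together with submultiplicativity gives $\TNorm{(\matI-\matQ\matQ\transp)\matU_k}\le\TNorm{\mat{F}}\le\sigma_{k+1}^{2p+1}\sigma_k^{-(2p+1)}\TNorm{\matOmega_2}\TNorm{\matOmega_1^{-1}}=\gamma_k^{2p+1}\TNorm{\matOmega_2}\TNorm{\matOmega_1^{-1}}$. Since $\matQ\matQ\transp$ and $\matU_k\matU_k\transp$ are orthogonal projectors of the same rank $k$, the standard identity $\TNorm{\matQ\matQ\transp-\matU_k\matU_k\transp}=\TNorm{(\matI-\matQ\matQ\transp)\matU_k}$ then converts this into the desired deterministic bound $\TNorm{\matQ\matQ\transp-\matU_k\matU_k\transp}\le\gamma_k^{2p+1}\TNorm{\matOmega_2}\TNorm{\matOmega_1^{-1}}$.

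For the probabilistic step, note that since $\matV_{\rho-k}\transp$ has orthonormal rows, both $\matOmega_1$ and $\matOmega_2$ have i.i.d.\ standard Gaussian entries, and $\TNorm{\matOmega_2}\le\TNorm{\matS}$. Lemma~\ref{lem:gauss1} applied to $\matS\in\R^{n\times k}$ with $t=4$ gives $\TNorm{\matS}\le 4\sqrt n$ except with probability $\le e^{-2n}\le e^{-n}$, and Lemma~\ref{lem:gauss2} applied to $\matOmega_1\in\R^{k\times k}$ gives $\TNorm{\matOmega_1^{-1}}=1/\sigma_k(\matOmega_1)\le\sqrt k/\delta$ except with probability $\le 2.35\delta$. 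A union bound then yields, with probability at least $1-e^{-n}-2.35\delta$,
$$
\TNorm{\matQ\matQ\transp-\matU_k\matU_k\transp}\le \gamma_k^{2p+1}\cdot 4\sqrt n\cdot\frac{\sqrt k}{\delta}\le \gamma_k^{2p}\cdot\frac{4\sqrt{nk}}{\delta},
$$
using $\gamma_k\le 1$. The hypothesis $p\ge\ln(\varepsilon\delta/4n)/\ln(\gamma_k^2)$ is exactly $\gamma_k^{2p}\le\varepsilon\delta/(4n)$ (which tacitly requires $\gamma_k<1$ so that $\ln(\gamma_k^2)<0$), and then the right-hand side is at most $\varepsilon\sqrt{k/n}\le\varepsilon$ because $k<\rho\le n$.

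I expect the main obstacle to be the deterministic inequality, specifically the bookkeeping that $\col(\matQ)$ contains the columns of $\matU_k+\matU_{\rho-k}\mat{F}$ and that $\matQ\matQ\transp$ genuinely has rank $k$ (so that the projector identity applies); both rest on the almost-sure invertibility of $\matOmega_1$ (equivalently, full column rank of $\matY$), which has to be argued cleanly. Everything downstream is submultiplicativity plus the two Gaussian tail bounds already recorded as Lemmas~\ref{lem:gauss1} and~\ref{lem:gauss2}.
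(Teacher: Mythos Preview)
Your argument is correct and is essentially a self-contained unpacking of the citation: the paper's proof is the single line ``Immediate from Corollary~11 in~\cite{GKB13}'', and the appendix Lemma~\ref{lem2} records exactly the deterministic bound $\gamma_p^2\le\gamma_k^{4p+2}\bigl(\sigma_1(\matV_{\rho-k}\transp\matS)/\sigma_k(\matV_k\transp\matS)\bigr)^2$ together with the identity $\TNorm{\matP_{\Omega_1}-\matP_{\Omega_2}}^2=\gamma_p^2/(1+\gamma_p^2)$, which is then combined with Lemmas~\ref{lem:gauss1} and~\ref{lem:gauss2} in the same way you do. The only cosmetic difference is that the cited result uses the sharper equality $\gamma_p^2/(1+\gamma_p^2)$ for the squared projector gap, whereas you use the cruder $\TNorm{\matQ\matQ\transp-\matU_k\matU_k\transp}\le\TNorm{\mat F}$; both suffice for the stated conclusion.
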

\begin{proof}
See the Appendix.
\end{proof}
Lemma~\ref{lem1} bounds the difference in the projection operators.
Notice that in Lemma~\ref{lem:basic1} we also need the $k$th
singular value of \math{\tilde\matA_k}. This can be bounded by
Weyl's theorem~(Lemma~\ref{lem:pert2}):
\eqan{
|\sigma_k(\tilde\matA_k)-\sigma_k(\matA_k)|
&\le&
\norm{\tilde\matA_k-\matA_k}_2
\\
&\le&
\sigma_1(\matA)\TNorm{\tilde\matU_k\tilde\matU_k\transp - 
\matU_k\matU_k\transp},
}
from which we have that
\mld{
\sigma_k(\tilde\matA_k)\ge
\sigma_k
-\sigma_1\TNorm{\tilde\matU_k\tilde\matU_k\transp - 
\matU_k\matU_k\transp}.
\label{eq:singular1}
}
We are now ready to prove Eqn.~\r{eqnthm1} in 
Theorem~\ref{thm1}. Set
\mand{
p\ge
\frac{\ln\left(\varepsilon\delta\sigma_k^2/12n\sigma_1^2\right)}{
\ln(\gamma_k^2)}
=
\frac{\ln\left((\frac{\varepsilon\sigma_k^2}{3\sigma_1^2})\delta/4n\right)}{
\ln(\gamma_k^2)}.
}
It now follows from Lemma~\ref{lem1} that 
\mld{
\TNorm{\tilde\matU_k\tilde\matU_k\transp - \matU_k\matU_k\transp}
\le
\frac{\varepsilon\sigma_k^2}{3\sigma_1^2}.\label{eq:project1}
}
From \r{eq:singular1}, it follows that 
\eqar{
\sigma_k(\tilde\matA_k)
\ge
\sigma_k
-\frac{\varepsilon\sigma_k^2}{3\sigma_1}
=
\sigma_k\left(1-\frac{\varepsilon\sigma_k}{3\sigma_1}\right)
\ge
\frac23\sigma_k.\label{eq:singular2}
}
Using Lemma~\ref{lem:basic1} with \r{eq:project1} and \r{eq:singular2}
we obtain a bound for \math{\Delta}:
\mand{
\Delta\le\frac{2\sigma_1^2}{\frac23\sigma_k\cdot\sigma_k}
\cdot
\left(\frac{\varepsilon\sigma_k^2}{3\sigma_1^2}
\right)\cdot\norm{\b}_2
=\varepsilon\norm{\b}_2.
}
We move to the proof of Eqn.~\r{eqnthm2} in Theorem~\ref{thm1}.
We need a perturbation theory result from~\cite{Han87} which
we state in our notation for a perturbation
of the matrix
\math{\matA_k} to the matrix 
\math{\tilde\matA_k}, without any perturbation on the response~\math{\b}.
Let   
$$
\matE=\tilde\matA_k-\matA_k.
$$
\begin{lemma}[{\cite[Eqn. (27)]{Han87}}]\label{lem:Han87}
If  
\math{\norm{\matE}_2<\sigma_k}, then,
\mand{
\frac{\norm{\x_k-\tilde\x_k}_2}{\norm{\x_k}_2}
\le
\frac{\sigma_1\norm{\matE}_2}{\sigma_k-\norm{\matE}_2}\left(
\frac{1}{\sigma_1}+\frac{\norm{\matA_k\x_k-\b}_2}{\sigma_k\norm{\b}_2}
\right)+\frac{\norm{\matE}_2}{\sigma_k}.
}
\end{lemma}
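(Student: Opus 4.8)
The plan is to reconstruct this as a specialization of the classical Wedin-type perturbation bound for the minimum-norm least-squares solution under a \emph{rank-preserving} additive perturbation of the coefficient matrix, exploiting the fact that here only the matrix is perturbed ($\matA_k\mapsto\tilde\matA_k$) while the response $\b$ is untouched. The first step is to record that the perturbation is rank-preserving: the hypothesis $\TNorm{\matE}<\sigma_k$ together with Weyl's inequality (Lemma~\ref{lem:pert2}) gives $\sigma_k(\tilde\matA_k)\ge\sigma_k-\TNorm{\matE}>0$, hence $\rank(\tilde\matA_k)=k=\rank(\matA_k)$ and $\TNorm{\pinv{\tilde\matA}_k}=1/\sigma_k(\tilde\matA_k)\le(\sigma_k-\TNorm{\matE})^{-1}$ --- this is the denominator that pervades the bound. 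Feeding $\tilde\x_k-\x_k=(\pinv{\tilde\matA}_k-\pinv{\matA}_k)\b$ into Lemma~\ref{sec:pre1} (equal-rank case) already yields $\TNorm{\tilde\x_k-\x_k}\le 2\TNorm{\matE}\TNorm{\b}/(\sigma_k(\sigma_k-\TNorm{\matE}))$, which is correct but --- after dividing by $\TNorm{\x_k}$ --- is looser than the stated bound when the residual is large; so to obtain the precise form one must expand the solution error more carefully.

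For that expansion, set $\rb=\b-\matA_k\x_k$ and let $P=\matA_k\pinv{\matA}_k$ be the orthogonal projector onto $\col(\matA_k)$, so $P\rb=\bm{0}$. Substituting $\matA_k=\tilde\matA_k-\matE$ into $\b=\matA_k\x_k+\rb$ and applying $\pinv{\tilde\matA}_k$ gives the exact identity
$$\tilde\x_k-\x_k = -(\matI-\pinv{\tilde\matA}_k\tilde\matA_k)\,\x_k \;-\; \pinv{\tilde\matA}_k\matE\x_k \;+\; \pinv{\tilde\matA}_k\rb .$$
I would then bound the three pieces. The first is the projector $\matI-\pinv{\tilde\matA}_k\tilde\matA_k$ onto $\ker(\tilde\matA_k)$ applied to $\x_k$, which lies in the row space of $\matA_k$; since $(\matI-\pinv{\tilde\matA}_k\tilde\matA_k)\tilde\matA_k\transp=\bm{0}$ we have $(\matI-\pinv{\tilde\matA}_k\tilde\matA_k)\matA_k\transp=-(\matI-\pinv{\tilde\matA}_k\tilde\matA_k)\matE\transp$, and composing with $\pinv{\matA}_k$ gives $\TNorm{\matE}\TNorm{\x_k}/\sigma_k$ --- the trailing $\TNorm{\matE}/\sigma_k$ term. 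The second is immediate from submultiplicativity and the bound on $\TNorm{\pinv{\tilde\matA}_k}$: $\TNorm{\pinv{\tilde\matA}_k\matE\x_k}\le\TNorm{\matE}\TNorm{\x_k}/(\sigma_k-\TNorm{\matE})$, matching the $\tfrac{\sigma_1\TNorm{\matE}}{\sigma_k-\TNorm{\matE}}\cdot\tfrac1{\sigma_1}$ contribution. For the third, insert the projector $\tilde\matA_k\pinv{\tilde\matA}_k$ onto $\col(\tilde\matA_k)$, use $(\matI-P)\tilde\matA_k=(\matI-P)\matE$ (since $(\matI-P)\matA_k=\bm{0}$) to control the subspace-rotation factor by $\TNorm{\matE}\TNorm{\pinv{\tilde\matA}_k}$, and then rearrange, with the help of $\TNorm{\rb}=\TNorm{\matA_k\x_k-\b}$, $\TNorm{\x_k}\le\TNorm{\b}/\sigma_k$ and $\sigma_1\TNorm{\x_k}\ge\TNorm{\matA_k\x_k}$, toward the form $\tfrac{\sigma_1\TNorm{\matE}}{\sigma_k-\TNorm{\matE}}\cdot\tfrac{\TNorm{\rb}}{\sigma_k\TNorm{\b}}\cdot\TNorm{\x_k}$ of the lemma. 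Summing the three and dividing by $\TNorm{\x_k}$ then gives the claim.

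The step I expect to be the real obstacle is the third (residual/cross) term: the clean identity route naturally produces a factor of the shape $\TNorm{\rb}/((\sigma_k-\TNorm{\matE})\,\TNorm{\x_k})$, and turning this into \emph{exactly} Hansen's form --- with $\sigma_1$ in the numerator and $\TNorm{\b}$ in place of $\TNorm{\x_k}$ in the denominator --- requires deploying the relations among $\TNorm{\b}$, $\TNorm{\matA_k\x_k}$, $\TNorm{\rb}$ and $\TNorm{\x_k}$ in the right order; this is the bookkeeping done in \cite{Han87}, and a structurally identical but slightly weaker bound results if one is careless. Everything else --- rank preservation, Weyl's inequality, and the submultiplicativity and $\sin\theta$-type estimates --- is routine. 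Since the statement is quoted verbatim from \cite[Eqn.~(27)]{Han87}, in the paper itself one simply cites it; the above is how I would reconstruct it from scratch.
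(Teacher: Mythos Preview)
The paper does not prove this lemma at all: it is stated with attribution to \cite[Eqn.~(27)]{Han87} and used as a black box in the proof of Theorem~\ref{thm1}. You correctly identify this in your final sentence, so there is no discrepancy to report.

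Your reconstruction sketch is the standard Wedin-type argument and is essentially sound: the rank-preservation step via Weyl's inequality and the three-term decomposition
\[
\tilde\x_k-\x_k \;=\; -(\matI-\pinv{\tilde\matA}_k\tilde\matA_k)\x_k \;-\; \pinv{\tilde\matA}_k\matE\x_k \;+\; \pinv{\tilde\matA}_k\rb
\]
are correct, and the first two pieces are bounded exactly as you indicate. Your caution about the third (residual) term is well placed: the naive route produces a factor $\TNorm{\matE}\,\TNorm{\rb}/\bigl((\sigma_k-\TNorm{\matE})^2\TNorm{\x_k}\bigr)$, and converting this to Hansen's form with $\sigma_1$ in the numerator and $\TNorm{\b}$ in the denominator is not a pure inequality in one direction (since $\TNorm{\b}$ can exceed $\sigma_1\TNorm{\x_k}$ when the residual is large). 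Hansen's original derivation in \cite{Han87} arrives at the stated form through a slightly different normalization and grouping; your sketch would yield a bound of the same order but not literally Eqn.~(27). Since the paper only needs the lemma as a citation, this does not affect anything downstream.
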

We can simplify the bound in Lemma~\ref{lem:Han87}
because \math{\norm{\matA_k\x_k-\b}_2\le\norm{\b}_2}:
\mld{
\frac{\norm{\x_k-\tilde\x_k}_2}{\norm{\x_k}_2}
\le
\frac{\sigma_1\norm{\matE}_2}{\sigma_k-\norm{\matE}_2}\left(
\frac{1}{\sigma_1}+\frac{1}{\sigma_k}
\right)+\frac{\norm{\matE}_2}{\sigma_k}.\label{eq:Han87}
}
Using the bound \r{eq:project1} in \r{eq:perturb1} and recalling that
\math{\gamma_k=\frac{\sigma_{k+1}}{\sigma_k}},
\mand{
\norm{\matE}=\norm{\tilde\matA_k-\matA_k}_2\le\frac{\varepsilon\gamma_k}{3}\cdot\sigma_k.
}
In particular, since \math{\varepsilon<1} and \math{\gamma_k\le1},
\math{\norm{\tilde\matA_k-\matA_k}_2<\sigma_k} and we can apply
Lemma~\ref{lem:Han87}, or the bound in Eqn. \r{eq:Han87}:
\eqan{
\frac{\norm{\x_k-\tilde\x_k}_2}{\norm{\x_k}_2}
&\le&
\frac{\sigma_1(\frac{\varepsilon\gamma_k}{3}\sigma_k)}{\sigma_k-(\frac{\varepsilon\gamma_k}{3}\sigma_k)}\left(
\frac{1}{\sigma_1}+\frac{1}{\sigma_k}
\right)+\frac{(\frac{\varepsilon\gamma_k}{3}\sigma_k)}{\sigma_k}\\
&=&
\frac{\varepsilon\gamma_k}{3}
\cdot
\left(
\frac{1}{1-\frac{\varepsilon\gamma_k}{3}}\left(
1+\frac{1}{\gamma_k}
\right)+1
\right)\\
&\le&
\frac{\varepsilon\gamma_k}{3}
\cdot
\left(
\frac{3}{2}\left(
1+\frac{1}{\gamma_k}
\right)+1
\right)\\
&=&
\frac{\varepsilon}{3}
\cdot
\left(\frac{5\gamma_k}{2}+\frac32\right)
\le \frac{4}{3}\varepsilon.
}
(The second inequality is because 
\math{1-\frac{\varepsilon\gamma_k}{3}\ge\frac23}; and,
the final inequality is because \math{\gamma_k\le 1}.)
\qed

\section{Lower Bound: Additive Error is Unavoidable}\label{sec:lower}

We now show that the additive error of Eqn.~\ref{eqnthm1} in Theorem~\ref{thm1} is tight.
Towards this end, let us consider the class of (fast) algorithms which operate as follows:
\begin{enumerate}\itemsep0pt
\item Quickly construct matrix \math{\tilde\matA_k} of rank \math{k} obliviously to $\b$.
\item Use \math{\tilde \matA_k} to construct the approximate solution
\math{\tilde \x_k=\pinv{\tilde\matA}_k \b}.
\end{enumerate}
Let \math{\norm{(\matI-\matA_k\pinv{\tilde\matA}_k)\matA_k}_2
=\varepsilon\norm{\matA_k}_2}.
The cross-`projection' operator \math{\matA_k\pinv{\tilde\matA}_k}
quantifies how well \math{\pinv{\tilde\matA}_k} approximates
\math{\pinv{\matA}_k} (if \math{\pinv{\tilde\matA}_k=\pinv{\matA}_k}, then
it is a projection operator and \math{\varepsilon=0}).
Note that
\math{\pinv{\tilde\matA}_k-\pinv{\matA}_k} and
\math{\matA_k-\tilde\matA_k=(\matU_k\matU_k\transp-\tilde\matU_k\tilde\matU_k\transp)
\matA} are related (see the discussion in Section~\ref{sec:pre1}),
and for our algorithm
\math{\norm{\matU_k\matU_k\transp-\tilde\matU_k\tilde\matU_k\transp}_2} is bounded by $\varepsilon$ (see Lemma~\ref{lem1}).

The next theorem
states that
the additive error in Eqn.~\ref{eqnthm1} in Theorem~\ref{thm1}
is about the best you
can expect of algorithms that construct \math{\tilde\x_k} via an
approximation
\math{\tilde \matA_k}, provided that \math{\tilde\matA_k} is constructed
obliviously to \math{\b}. The notion of approximation we consider is
via the equation  \math{\norm{(\matI-\matA_k\pinv{\tilde\matA}_k)\matA_k}_2
=\varepsilon\norm{\matA_k}_2}.
\begin{theorem}\label{thm2}
Fix \math{\matA,\tilde\matA_k\in\R^{m\times n}}. Let
\math{\matA_k} be the best
rank-\math{k} approximation to \math{\matA} used in the
top-\math{k} SVD-truncated regression and suppose
\math{\tilde\matA_k}
satisfies
(for $\varepsilon > 0$):
\mand{\norm{(\matI-\matA_k\pinv{\tilde\matA}_k)\matA_k}_2=\varepsilon\norm{\matA_k}_2.}
Then, for some $\b\in\R^{m}$, with $\x_k = \pinv{\matA}_k \b$ and
$\tilde\x_k = \pinv{\tilde\matA}_k \b$,
\eqan{
\norm{\matA\x_k-\b}_2&=&0\\
\norm{\matA\tilde\x_k-\b}_2&\ge&\epsilon\norm{\b}_2.
}
In particular, no multiplicative error 
bound is possible and the 
additive error is at least \math{\varepsilon \norm{\b}_2}.
\end{theorem}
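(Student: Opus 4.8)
The plan is to exhibit an explicit worst-case right-hand side $\b$ as a function of $\matA$ and $\tilde\matA_k$; this is legitimate precisely because $\tilde\matA_k$ is built obliviously to $\b$, so the adversary may pick $\b$ after $\tilde\matA_k$ is fixed. Since the hypothesis $\norm{(\matI-\matA_k\pinv{\tilde\matA}_k)\matA_k}_2=\varepsilon\norm{\matA_k}_2$ is an operator-norm identity, there is a unit vector $\w\in\R^n$ attaining it, that is $\norm{(\matI-\matA_k\pinv{\tilde\matA}_k)\matA_k\w}_2=\varepsilon\norm{\matA_k}_2=\varepsilon\sigma_1(\matA)$ (here and below I use the mild standing assumption $1\le k\le\rank(\matA)$, so that $\matSig_k$ is invertible and $\norm{\matA_k}_2=\sigma_1(\matA)$). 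I would then set $\b=\matA_k\w$; note $\b\in\col(\matA_k)=\col(\matU_k)$ and $\norm{\b}_2\le\sigma_1(\matA)$.

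For the first identity, observe that $\matA\pinv{\matA}_k=\matA\matV_k\matSig_k^{-1}\matU_k\transp=\matU_k\matSig_k\matSig_k^{-1}\matU_k\transp=\matU_k\matU_k\transp$, using $\matA\matV_k=\matU_k\matSig_k$ and invertibility of $\matSig_k$. Hence $\matA\x_k=\matA\pinv{\matA}_k\b=\matU_k\matU_k\transp\b=\b$, the last step because $\b\in\col(\matU_k)$, so $\norm{\matA\x_k-\b}_2=0$.

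For the second inequality the key step is an orthogonal (Pythagorean) split of the residual. Write $\matA\tilde\x_k-\b=(\matA_k\tilde\x_k-\b)+\matA_{\rho-k}\tilde\x_k$. Both $\matA_k\tilde\x_k$ and $\b$ lie in $\col(\matU_k)$, while $\matA_{\rho-k}\tilde\x_k\in\col(\matU_{\rho-k})$, and $\col(\matU_k)\perp\col(\matU_{\rho-k})$; therefore $\norm{\matA\tilde\x_k-\b}_2\ge\norm{\matA_k\tilde\x_k-\b}_2$. Since $\b=\matA_k\w$ and $\tilde\x_k=\pinv{\tilde\matA}_k\b$, we get $\matA_k\tilde\x_k-\b=-(\matI-\matA_k\pinv{\tilde\matA}_k)\matA_k\w$, whose norm equals $\varepsilon\sigma_1(\matA)$ by the choice of $\w$. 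Combining with $\norm{\b}_2\le\sigma_1(\matA)$ gives $\norm{\matA\tilde\x_k-\b}_2\ge\varepsilon\sigma_1(\matA)\ge\varepsilon\norm{\b}_2$. The ``no multiplicative bound'' conclusion is then immediate: any inequality $\norm{\matA\tilde\x_k-\b}_2\le C\,\norm{\matA\x_k-\b}_2$ would force the left side to vanish, contradicting $\varepsilon\norm{\b}_2>0$.

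The argument is short once the adversarial $\b=\matA_k\w$ is spotted, and there is no genuine analytic difficulty. The one thing to get right is the bookkeeping of singular subspaces: $\x_k$ and $\tilde\x_k$ live in row spaces, $\b$ and $\matA_k\tilde\x_k$ live in $\col(\matU_k)$, and the leftover component $\matA_{\rho-k}\tilde\x_k$ is orthogonal to $\col(\matU_k)$ — so that Pythagoras applies cleanly and $\matA$ may be freely replaced by $\matA_k$ in the residual lower bound. I would also record the assumption $1\le k\le\rank(\matA)$ up front, since otherwise the quantities $\norm{\matA_k}_2=\sigma_1(\matA)$ and $\matA\pinv{\matA}_k=\matU_k\matU_k\transp$ can degenerate.
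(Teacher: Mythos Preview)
Your proof is correct and follows essentially the same approach as the paper: both choose \math{\b=\matA_k\z} with \math{\z} the top right singular vector of \math{(\matI-\matA_k\pinv{\tilde\matA}_k)\matA_k}, use the orthogonal split \math{\matA=\matA_k+\matA_{\rho-k}} to drop the \math{\matA_{\rho-k}} term from the residual, and finish with \math{\norm{\b}_2\le\norm{\matA_k}_2\norm{\z}_2}. The only cosmetic differences are your explicit computation \math{\matA\pinv{\matA}_k=\matU_k\matU_k\transp} (the paper instead writes \math{\matA\pinv{\matA}_k\matA_k=\matA_k}) and your helpful remark that \math{\b\ne 0} under \math{\varepsilon>0}, which the paper leaves implicit.
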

\begin{proof}
We set \math{\b=\matA_k\z} for \math{\z} to be selected later. Then
\mand{\matA\x_k-\b=(\matA\pinv{\matA}_k\matA_k-\matA_k)\z=\bm{0}.}
(The last equality is because
\math{\matA\pinv{\matA}_k\matA_k=\matA_k\pinv{\matA}_k\matA_k=\matA_k}.)
We now manipulate \math{\norm{\matA\tilde\x_k-\b}_2}.
\eqan{
\norm{\matA\tilde\x_k-\b}_2
&=&
\norm{\matA\pinv{\tilde\matA}_k \matA_k\z-\matA_k\z}_2\\
&=&
\norm{\matA_k\pinv{\tilde\matA}_k \matA_k\z-\matA_k\z+
\matA_{\rho-k}\pinv{\tilde\matA}_k  \matA_k\z}_2\\
&\ge&
\norm{\matA_k\pinv{\tilde\matA}_k  \matA_k\z-\matA_k\z}_2\\
&=&
\norm{(\matI-\matA_k\pinv{\tilde\matA}_k )\matA_k\z}_2\\
}
(The inequality follows because \math{\norm{\matX+\matY}_2\ge\norm{\matX}_2}
when \math{\matX\transp\matY=\bm0}.)
We now choose \math{\z} to be the top right singular vector of the matrix
\math{(\matI-\matA_k\pinv{\tilde\matA}_k )\matA_k}. Then,
\eqan{
\norm{(\matI-\matA_k\pinv{\tilde\matA}_k )\matA_k\z}_2
&=&
\norm{(\matI-\matA_k\pinv{\tilde\matA}_k )\matA_k}_2
\cdot\norm{\z}_2\\
&\ge&
\norm{(\matI-\matA_k\pinv{\tilde\matA}_k )\matA_k}_2
\cdot\frac{\norm{\b}_2}{\norm{\matA_k}_2}\\
&=&\epsilon\norm{\b}_2.
}
The inequality uses
$
\norm{\b}_2=\norm{\matA_k\z}_2\le\norm{\matA_k}_2\norm{\z}_2,
$
and
the last equality is from the theorem statement.
\end{proof}

\section{Numerical Illustration}\label{sec:expt}
\renewcommand{\rb}[1]{\rotatebox{90}{#1}}
\providecommand{\st}[1]{\scalebox{1.35}{#1}}
\providecommand{\stn}[1]{\scalebox{1.1}{#1}}

\begin{figure*}[t]
\begin{tabular}{c@{\hspace*{0.1\textwidth}}c}
\resizebox{0.42\textwidth}{!}{
%
%
\begin{psfrags}%
\psfragscanon%
%
\psfrag{s05}[l][l]{\color[rgb]{0,0,0}\setlength{\tabcolsep}{0pt}\begin{tabular}{l}\st{Dimension $n$}\end{tabular}}%
\psfrag{s06}[l][l]{\color[rgb]{0,0,0}\setlength{\tabcolsep}{0pt}\begin{tabular}{l}\rb{\st{Error \%}}\end{tabular}}%
\psfrag{s07}[l][l]{\color[rgb]{0,0,0}\setlength{\tabcolsep}{0pt}\begin{tabular}{l}\st{\blue{$\displaystyle\frac{\norm{\matA\tilde\x_k-\b}_2-\norm{\matA\x_k-\b}_2}{\norm{\b}_2}$}}\end{tabular}}%
\psfrag{s08}[l][l]{\color[rgb]{0,0,0}\setlength{\tabcolsep}{0pt}\begin{tabular}{l}\st{\red{$\displaystyle\frac{\norm{\tilde\x_k-\x_k}_2}{\norm{\x_k}_2}$}}\end{tabular}}%
%
\psfrag{x01}[t][t]{\stn{100}}%
\psfrag{x02}[t][t]{\stn{300}}%
\psfrag{x03}[t][t]{\stn{500}}%
\psfrag{x04}[t][t]{\stn{700}}%
\psfrag{x05}[t][t]{\stn{900}}%
\psfrag{x06}[t][t]{\stn{1100}}%
\psfrag{x07}[t][t]{\stn{1300}}%
\psfrag{x08}[t][t]{\stn{1500}}%
%
\psfrag{v01}[r][r]{\stn{0}}%
\psfrag{v02}[r][r]{\stn{2}}%
\psfrag{v03}[r][r]{\stn{4}}%
\psfrag{v04}[r][r]{\stn{6}}%
\psfrag{v05}[r][r]{\stn{8}}%
%
\resizebox{12cm}{!}{\includegraphics{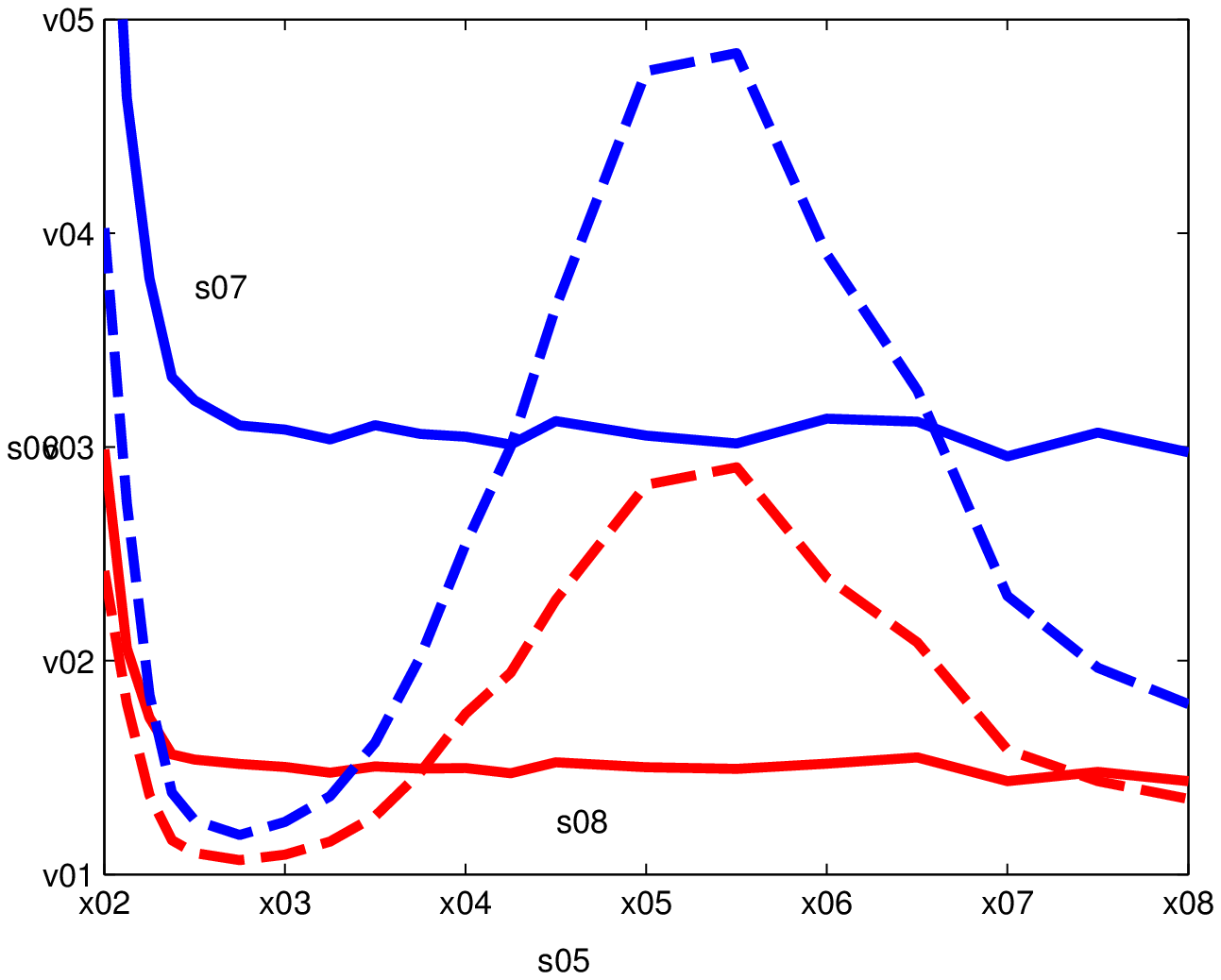}}%
\end{psfrags}%
%
}
&
\resizebox{0.42\textwidth}{!}{
%
%
\begin{psfrags}%
\psfragscanon%
%
\psfrag{s05}[l][l]{\color[rgb]{0,0,0}\setlength{\tabcolsep}{0pt}\begin{tabular}{l}\st{Dimension $n$}\end{tabular}}%
\psfrag{s06}[l][l]{\color[rgb]{0,0,0}\setlength{\tabcolsep}{0pt}\begin{tabular}{l}\rb{\st{Time($\x_k$) / Time($\tilde\x_k$)}}\end{tabular}}%
%
\psfrag{x01}[t][t]{\stn{100}}%
\psfrag{x02}[t][t]{\stn{300}}%
\psfrag{x03}[t][t]{\stn{500}}%
\psfrag{x04}[t][t]{\stn{700}}%
\psfrag{x05}[t][t]{\stn{900}}%
\psfrag{x06}[t][t]{\stn{1100}}%
\psfrag{x07}[t][t]{\stn{1300}}%
\psfrag{x08}[t][t]{\stn{1500}}%
%
\psfrag{v01}[r][r]{\stn{0}}%
\psfrag{v02}[r][r]{\stn{0.25}}%
\psfrag{v03}[r][r]{\stn{0.5}}%
\psfrag{v04}[r][r]{\stn{0.75}}%
\psfrag{v05}[r][r]{\stn{1.0}}%
\psfrag{v06}[r][r]{\stn{1.25}}%
\psfrag{v07}[r][r]{\stn{1.5}}%
%
\resizebox{12cm}{!}{\includegraphics{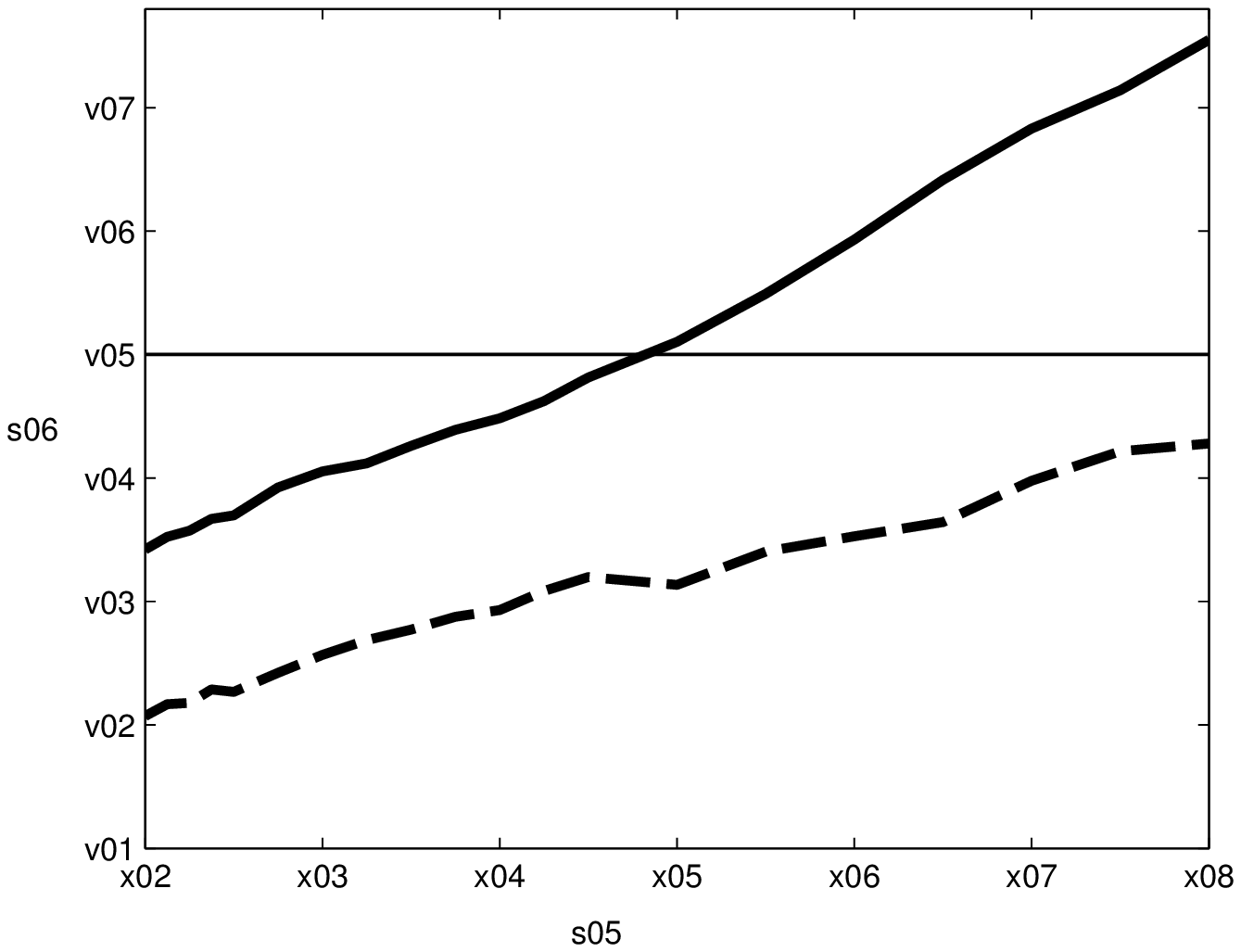}}%
\end{psfrags}%
%
}
\\
(a) Accuracy&(b) Running Time
\end{tabular}
\caption{(a) Solid curves: accuracy ratio for
our fast SVD-truncated regression algorithm,
\math{k=20}, \math{p=20\ln n}. Dashed curves (for comparison):
MATLAB's {\sf svds} solver which computes an approximation
to the top-\math{k} singular space; we set the error tolerance
to produce comparable error to our algorithm
({\sf options.tol=0.001}). The performance of our algorithm results in 
nearly constant error (approx 4\% in the objective and
1\% in the solution vector); the accuracy of {\sf svds} is approximately
the same but very unpredictible.
(b) Ratio of the time to compute the exact solution over time to compute
the approximation. For \math{n=1500} our algorithm is about 50\% more 
efficient. MATLAB's {\sf svds} is also more
efficient asymptotically than the exact solution, but is not as efficient as
our algorithm.
\label{fig:illustration}}
\end{figure*}

We perform a  numerical experiment on a synthetic
regression problem 
to illustrate the theory and the algorithm.
We construct a synthetic problem as follows.
We generate an \math{n\times n} matrix \math{\matA} of i.i.d. Gaussians,
and set the spectral gap \math{\gamma_k= \sigma_{k+1} / \sigma_k =0.99}. 
To do this,
use the SVD, \math{\matA=\matU\matSig\matV\transp} and 
rescale \math{\sigma_{k+1},\ldots,\sigma_n} up or down by 
a constant factor so that \math{\gamma_k=0.99}. 
Now reconstruct \math{\matA} using \math{\matU}, \math{\matV} and the 
rescaled~\math{\matSig}.
We construct the response
\math{\b=\frac{\matA_k\mathbf{r}_1\ }{\norm{\matA_k\mathbf{r}_1}_2}
+0.2\times\frac{\mathbf{r}_2\ }{\norm{\mathbf{r}_2}_2}},
where \math{\mathbf{r}_1,\mathbf{r}_2} are random standard Gaussian vectors.
So, the response \math{\b}
 has roughly \math{80\%} within the top-\math{k} singular
space.
We set 
\math{k=20} and run our algorithm with
\math{p=10\ln n}.
We vary \math{n\in[100,1000]} and for each value of \math{n} take the
average over several experiments to increase statistical significance.

For comparison, we also
use the truncated SVD algorithm {\sf svds} distributed
with MATLAB 8.1, where one can specify 
 an error tolerance {\sf tol}; {\sf svds} returns 
\math{\tilde\matU_k,\tilde\matSig_k,\tilde\matV_k} for which 
\math{\norm{\matA\tilde\matV_k-\tilde\matU_k\tilde\matSig_k}_2
\le\text{\sf tol}\cdot\norm{\matA}_2}.
The accuracy and running time results are
shown in Figure~\ref{fig:illustration} which illustrates the linear
speedup of our algorithm. For reference, at \math{n=1000}, the exact solution
takes about 2.5s on a single CPU laptop. Our algorithm performs according to
the theory (with \math{p=O(\ln n)} we achieve approximately 
fixed relative
error).

\bibliographystyle{abbrv}
\bibliography{RPI_BIB}


\appendix

\section{Proof of Lemma~\ref{lem1}}\label{sec:appproof} 
The result appeared in prior work~\cite[Corollary 11]{GKB13}.
Nevertheless, for completeness, we give a short, different
proof
based on~\cite[Theorem 2.6.1]{GV12}, which states that for any two $m \times k$ orthonormal
matrices \math{\matW,\matZ} with $m \ge k$: 
\mand{
\norm{\matW\matW\transp-\matZ\matZ\transp}_2=
\norm{\matZ\transp\matW^{\perp}}_2
=
\norm{\matW\transp\matZ^{\perp}}_2.
}
$\matZ^{\perp}\in \R^{m \times (m-k)}$ 
is such that $[\matZ, \matZ^{\perp}] \in \R^{m \times m}$ is a full orthonormal basis. We set \math{\matU_k^\perp=[\matU_{\rho-k},\matU_{m-\rho-k}]}. 

Given some (any) \math{\matS\in\R^{n\times k}},
recall that in our algorithm, \math{\matQ} is obtained by a 
QR-factorization of \math{(\matA\matA\transp)^p\matA \matS}:
\mand{
(\matA\matA\transp)^p\matA \matS=\matQ\matR.}
where 
\math{\matQ\in\R^{m\times k}} and 
\math{\matR\in\R^{k\times k}}. We need some basic facts:
\eqar{
\matQ\matR
&=&
\matU_k\matSig_k^{2p+1}\matV_k\transp\matS
+
\matU_{\rho-k}\matSig_{\rho-k}^{2p+1}\matV_{\rho-k}\transp\matS;
\ \ \ \ \ \
\label{app:basic1}\\
\sigma_k\left(\matQ\matR\right)
&\ge&
\sigma_k\left(\matU_k\matSig_k^{2p+1}\matV_k\transp\matS\right)
\ge
\sigma_k^{2p+1}\sigma_k(\matV_k\transp\matS)\label{app:basic2};\\
\sigma_i(\matQ\matR)&=&\sigma_i(\matR)\label{app:basic3};\\
\norm{\matX\matR}_2&\ge&\norm{\matX}_2\sigma_k(\matR),
\quad\text{for any }\matX\in\R^{\ell\times k}.
\label{app:basic4}
}
\r{app:basic1} follows from a direct computation using 
the SVD of \math{\matA};
\r{app:basic2} follows from 
\r{app:basic1} because \math{\matU_k} and \math{\matU_{\rho-k}}
span orthogonal spaces, and the fact that the minimum singular
value of a product is at least the product of the minimum singular
values;
\r{app:basic3} follows because \math{\matQ\transp\matQ=\matI_k};
\r{app:basic4} is well known: it is clear if 
\math{\sigma_k(\matR)=0} and if \math{\sigma_k(\matR)>0} then it follows from:
\mand{
\norm{\matX_2}
=
\max_{\x\not=\bm0}\frac{\norm{\matX\matR\x}_2}{\norm{\matR\x}_2}
\le
\max_{\x\not=\bm0}\frac{\norm{\matX\matR\x}_2}{\sigma_k(\matR)\norm{\x}_2}
=
\frac{\norm{\matX\matR}_2}{\sigma_k(\matR)}.
}
Observe that
\math{
\tilde\matU_k\tilde\matU_k\transp
=
\matQ\matU_{\matQ\transp\matA}\matU_{\matQ\transp\matA}\transp\matQ\transp
=
\matQ\matQ\transp,
}
because 
\math{\matU_{\matQ\transp\matA}\matU_{\matQ\transp\matA}=\matI_{k}}.
Therefore, using \cite[Theorem 2.6.1]{GV12},
\eqar{
\norm{\matU_k\matU_k\transp-\tilde\matU_k\tilde\matU_k\transp}_2
&=&
\norm{\matU_k\matU_k\transp-\matQ\matQ\transp}_2\nonumber
= \norm{ \matQ\transp \matU_k^{\perp}}_2
\\
&=&
\norm{(\matU_k^{\perp})\transp \matQ}_2
=
\norm{\matU_{\rho-k}\transp\matQ}_2.\ \ \ \ \ 
\label{app:GV12}
}
The last equality is because 
\math{\matU_{m-\rho-k}\transp\matQ=\bm0} because \math{\matQ} is in the
range of \math{\matA}.
We now bound \math{\norm{\matU_{\rho-k}\transp\matQ}_2}.
\eqar{
\norm{\matU_{\rho-k}\transp\matQ\matR}_2
&\ge&
\norm{\matU_{\rho-k}\transp\matQ}_2\sigma_k(\matR)\nonumber\\
&\ge&
\norm{\matU_{\rho-k}\transp\matQ}_2\sigma_k^{2p+1}\sigma_k(\matV_k\transp
\matS).
\label{app:chain1}
\\
\norm{\matU_{\rho-k}\transp\matQ\matR}_2
&=&
\norm{\matSig_{\rho-k}^{2p+1}\matV_{\rho-k}\transp\matS}_2\nonumber\\
&\le&
\sigma_{k+1}^{2p+1}\sigma_1(\matV_{\rho-k}\transp\matS).
\label{app:chain2}
}
\r{app:chain1} follows using \r{app:basic4} then
\r{app:basic3} then \r{app:basic2};
\r{app:chain2} uses \r{app:basic1} and submultiplicativity.
Using \r{app:GV12} with \r{app:chain1} and \r{app:chain2}, 
we have:
\begin{lemma}\label{lem:app1}
For any matrix $\matS \in \R^{n \times k}$, 
$$
\sigma_k(\matV_k\transp\matS)
\norm{\matU_k\matU_k\transp-\tilde\matU_k\tilde\matU_k\transp}_2
\le\gamma_k^{2p+1}\sigma_1(\matV_{\rho-k}\transp\matS).
$$
\end{lemma}
Lemma~\ref{lem:app1} holds for general \math{\matS}. We now use the
fact that \math{\matS} is a matrix of i.i.d. standard Gaussians.
Then, for any orthonormal matrix \math{\matV}, 
\math{\matV\transp\matS} is a matrix of i.i.d standard Gaussians.
So, \math{\matV_k\transp\matS} is a \math{k\times k} matrix to which 
Lemma~\ref{lem:gauss2} applies. 
Let \math{\matV \in \R^{n \times n}} be the extension of \math{\matV_{\rho-k} } to a full 
orthonormal
basis.
Then, \math{\matV\transp\matS} is an \math{n\times k} matrix 
to which Lemma~\ref{lem:gauss1} applies (we set \math{t=4}). 
By a union bound, with probability at least 
\math{1-e^{-2n}-2.35\delta}, both inequalities hold:
\eqan{
\sigma_k(\matV_k\transp\matS)&\ge&\delta k^{-\frac12};\\
\sigma_1(\matV_{\rho-k}\transp\matS)&\le&\sigma_1(\matV\transp\matS)
\le4n^{1/2}
}
Using Lemma~\ref{lem:app1} we conclude:
\mand{
\norm{\matU_k\matU_k\transp-\tilde\matU_k\tilde\matU_k\transp}_2
\le
4\gamma_k^{2p+1}\delta^{-1}\sqrt{nk}
\le
4\gamma_k^{2p}\delta^{-1}n.
}
(We used 
\math{\gamma_k\le1} and \math{k\le n}.) Set 
\math{4\gamma_k^{2p}\delta^{-1}n=\varepsilon} and solve for \math{p} to get
\math{p=\ln(\varepsilon\delta/4n)/\ln(\gamma_k^2)}, as desired.
\qed

\end{document}